\newtheorem{theorem}{Theorem}
\newtheorem{lemma}[theorem]{Lemma}
\newtheorem{corollary}[theorem]{Corollary}
\newtheorem{definition}[theorem]{Definition}
\newtheorem{observation}[theorem]{Observation}
\theoremstyle{plain}
\newtheorem*{rep@theorem}{\rep@title}
\newcommand{\newreptheorem}[2]{%
	\newenvironment{rep#1}[1]{%
		\def\rep@title{#2 \ref{##1} (restated)}%
		\begin{rep@theorem}}%
		{\end{rep@theorem}}}
\let\epsilon=\varepsilon
\def\Omegainit{\Omega_{0}}
\def\calD{\mathcal{D}}
\newcommand{\taumin}[1]{{\tau_{#1^-}}}  
\newcommand{\taumax}[1]{{\tau_{#1^+}}}
\newcommand{\tauord}{\tau_{\mathrm{ord}}}
\def\E{\mathbb{E}} 
\def\Exp{\mathrm{Exp}} 
\newcommand{\dt}{\hat{\Delta}} 
\newcommand{\tildeM}{\tilde{M}}
\def\tildef{\tilde{f}} 
\def\tildeF{\tilde{F}}
\def\tildeV{\tilde{V}}
\def\tildepi{\tilde{\pi}}
\def\mut{\mathrm{mut}}
\def\dmut{D_{\mut}} 
\def\Dmut{\dmut}
\newcommand{\maxfixprob}{\textsc{MaxFixProb}}
\newcommand{\maxfixprobmut}{\textsc{MaxFixProb}_{\mut}}
\def\vtuple{\mathbf{u}}
\def\ttuple{\boldsymbol \gamma}
\def\nonnegints{\mathbb{Z}_{\geq 0}}
\def\nonnegreals{\mathbb{R}_{\geq 0}}
\newcommand{\absorb}{A}
\title{Parameterised Approximation of the Fixation Probability of the Dominant Mutation in the Multi-Type Moran Process\footnote{This research was funded in whole, or in part, by the German Academic Scholarship Foundation. For the purpose of Open Access, the authors have applied a CC BY public copyright licence to any Author Accepted Manuscript version arising from this submission. All data is provided in full in the results section of this paper.}}
\author{Leslie Ann Goldberg \\ Department of Computer Science\\ University of Oxford\\ United Kingdom
\and
Marc Roth \\ Department of Computer Science \\ University of Oxford\\ United Kingdom 
\and
Tassilo Constantin Schwarz \\ Department of Applied Mathematics and Theoretical Physics \\ University of Cambridge\\ United Kingdom 
}
\date{14 March, 2023}
\begin{document}

\maketitle

\begin{abstract}
The multi-type Moran process is an evolutionary process on a connected graph $G$ in which each vertex has one of $k$ types and, in each step, a vertex $v$ is chosen to reproduce its type to one of its neighbours. The probability of a vertex $v$ being chosen for reproduction is proportional to the fitness of the type of $v$.
So far, the literature was almost solely concerned with the $2$-type Moran process in which each vertex is either healthy (type $0$) or a mutant (type $1$), and the main problem of interest has been the (approximate) computation of the so-called \emph{fixation probability}, i.e., the probability that eventually all vertices are mutants.

In this work we initiate the study of approximating fixation probabilities in the multi-type Moran process on general graphs. Our main result is an FPTRAS (fixed-parameter tractable randomised approximation scheme) for computing the fixation probability of the dominant mutation; the parameter is the number of types and their fitnesses.
In the course of our studies we also provide novel upper bounds on the expected \emph{absorption time}, i.e., the time that it takes the multi-type Moran process to reach a state in which each vertex has the same type.
\end{abstract}

\section{Introduction}
The study of the ($2$-type) Moran process dates back to the late 1950s~\cite{Moran58} with the goal of modelling and understanding how an advantageous mutation spreads through a finite community: In its initial form, the process starts with $n$ individuals, $n-1$ of which are healthy (fitness $1$), and one of which is a ``mutant'' with an advantage (fitness $r>1$). In each step of the process, an individual is chosen with probability proportional to its fitness and replaces the type of another individual, which is chosen uniformly at random, with its own type (healthy or mutant). This process becomes stable if either all individuals are mutants, or if the mutation is extinct. It is well-known that the probability that the mutation takes over, which is called the \emph{fixation probability}, is $(1-r)/(1-r^n)$ (see e.g.\ Lieberman, Hauert, and Nowak~\cite{lieberman-EvolutionaryDynamicsGraphs-2005}).

The Moran process was later generalised to communities with a spatial structure~\cite{lieberman-EvolutionaryDynamicsGraphs-2005,Nowak06}. In the generalisation, individuals are vertices of an $n$-vertex graph $G$ and each vertex can only replace the type of its neighbours. Moreover, the single mutant at the start of the process is chosen uniformly at random from all vertices. For this \emph{spatial} Moran process, the fixation probability crucially depends on the structure of the graph. This quantity can be calculated explicitly, but the number of equations for doing so would grow exponentially in the graph’s size. Hence, D{\'i}az, Goldberg, Mertzios, Richerby, Serna and Spirakis~\cite{diaz-ApproximatingFixationProbabilities-2014} constructed a \emph{fully polynomial randomised approximation scheme (FPRAS)} for the associated computational problem.
This means that they gave an algorithm that takes as input a graph $G$, a mutation fitness $r\geq 1$, and real numbers~$\epsilon$ and $\delta$ in the interval $(0,1)$ that guide the the accuracy of the approximation. With probability at least~$1-\delta$,
the algorithm outputs an $\varepsilon$-approximation of the fixation probability, which means a number which is within $1\pm \epsilon$ of the fixation probability. According to the definition of FPRAS, the running time of this algorithm is bounded from above by a polynomial in $|G|$, $1/\varepsilon$ and $\log(1/\delta)$. For more related work on the $2$-type Moran process including improved FPRASes, see Section~\ref{sec:related_work}.

Much less is known about the Moran process 
when there is more than one type of  mutation. Although the multi-type Moran process has been considered for the special case of complete graphs by Ferreira and Neves~\cite{ferreira-FixationProbabilitiesMoran-2020}, there is, to the best of our knowledge, no work on the multi-type Moran process on general graphs.

In the setting of multiple types, mutated individuals will compete not only against healthy individuals, but different mutation lineages will also fight for dominance against each other. For example, this is the case in the formation of cancer in which typically a tumour contains two to eight ``driver'' mutations and hundreds of ``passenger'' mutations (see e.g.\ Vogelstein, Papadopoulos, Velculescu, Zhou, D{\'i}az, and Kinzler~\cite{vogelstein-CancerGenomeLandscapes-2013}).
For this reason, we initiate in this work the study of the multi-type Moran process on general graphs. The goal is to compute the fixation probability of the dominant mutant, i.e., the type with maximum fitness: If this probability is small enough, then we can hope for the healthy individuals/cells to eventually take over.  

Our main result is an efficient algorithm for approximating the fixation probability of the dominant mutant.
For the formal statement of our results, we will next introduce the process in detail.

\subsection{The Multi-Type Moran Process}

The graphs that we consider in this work are simple, undirected,  and without loops. Given a vertex $v$ of a graph $G=(V,E)$, we write $N(v):= \{u \mid\{u,v\}\in E\}$ for the (open) neighbourhood of $v$, and we write $d(v):=|N(v)|$ for the degree of $v$.

Let $G=(V,E)$ be an $n$-vertex graph. Let $\tau$ be a finite set of types 
and let $\tauord$ be an element of~$\tau$. 
Intuitively, $\tauord$ is the ``ordinary'' type --- all other types in~$\tau$ are mutations.
Let~$f$ be a function from~$\tau$ to $\mathbb{Q}_{\geq 1}$. 
We refer to~$f$ as a ``fitness function'' because we use it to
measure the ``fitness'' of the types in~$\tau$.
Let $f^+ = \max_{i \in \tau} f(i)$ and $\taumax{f} = \{ i \in \tau \mid f(i) = f^+\}$.
Similarly, let
$f^- = \min_{i\in \tau} f(i)$ and $\taumin{f} = \{i \in \tau \mid f(i) = f^-\}$.
In this work, we will 
study the case where mutations are advantageous, so $\tauord \in \taumin{f}$.
We say that $f$ is an \emph{advantageous fitness function} for $\tau$ and $\tauord$ in this case.

Let $\Omega$ be the set of functions from~$V$ to~$\tau$.
The elements of $\Omega$ will be the states that we study --- each state assigns a type to every vertex of~$G$.
Given a state $M_0 \in \Omega$, the
\emph{Moran process} 
corresponding to $G$, $\tau$, $f$ and $M_0$ is  a Markov Chain $M=M_0,M_1,\ldots$ with state space $\Omega$
and start state~$M_0$.

Before defining the transitions of~$M$, we give some notation. If $S$ is a state in $\Omega$
  and $v$ and $w$ are vertices in $V$ then $S\vert_{v\rightarrow w}$ is the state $S'$
such that $S'(w) = S(v)$ and, for all $u\in V\setminus \{w\}$, $S'(u) = S(u)$.
Informally, $S'$ is derived from $S$ by reproducing the type of $v$ onto the type of $w$. For example, if $v$ is a mutation type, then this mutation spreads to its neighbour~$w$ in $S\vert_{v\rightarrow w}$.

We refer to $M_t$ as the state of the Markov Chain~$M$ at (discrete) time~$t$.
For any type $j\in\tau$, we write $V_j(t)$ for the set of vertices that are assigned type $j$ at time $t$, that is, $V_j(t):=\{v \in V \mid M_t(v) = j\}$.  To simplify the notation, we set $f(v,t):=f(M_t(v))$ for the fitness of 
the type of~$v$ at time~$t$. Given a subset $U\subseteq V$, we write
$ f(U,t):=\sum_{v\in U}f(v,t)$ to denote
the total fitness of the vertices in $U$ at time~$t$. The \emph{total fitness} $F(t)$ at time $t$ is the total fitness of all vertices, that is, $F(t):=f(V,t)$.

The transition from $M_t$ to $M_{t+1}$ is defined as follows.
A random vertex $v\in V$ is chosen with probability $f(v,t)/F(t)$.
A vertex $w\in N(v)$ is chosen with probability $1/d(v)$.
Then $M_{t+1} := M_t\vert_{v\rightarrow w}$.
Intuitively, in each transition,   a vertex $v$ is chosen with probability proportional to the fitness of its type at time $t$ and a vertex $w$ is chosen uniformly at random from the neighbours of $v$. Then the type of~$w$ is replaced with the type of~$v$.

If $M_t$ assigns the same type to all vertices then the state of the Markov chain~$M$ does not change after time~$t$. 
We say that type $j\in \tau$ \emph{fixates} by time~$t$ if 
$M_t$ assigns type $j$ to all vertices. 
If $G$ is connected, then with probability~$1$, some type in~$\tau$ fixates.
We use $\pi_j$ to denote the probability that $j$ fixates.

If we want to clarify $G$, $\tau$, $f$, and $M_0$, 
then we refer to $M$ as $M(G,\tau,f,M_0)$ and we refer to $\pi_j$ as
$\pi_j(G,\tau,f,M_0)$.

Sometimes it will be useful to consider a distribution on the start state~$M_0$.
If $D$ is a distribution on $\Omega$ then
$\pi_j(G,\tau,f,D) = \sum_{M_0\in \Omega} \Pr_D(M_0) \, \pi_j(G,\tau,f,M_0)$.

For our purposes, not every distribution~$D$ on~$\Omega$ will be important. 
In the well-studied case where 
$|\tau|=2$ and the fitness function is advantageous (so  
$\tauord \in \taumin{f}$)  
the idea is that the type $\alpha \in \tau \setminus \{\tauord\}$ 
is a ``mutant'' type with a high fitness that arises as a result
of a single mutation. It is therefore appropriate 
 to consider
the 
case where the start state~$M_0$ is drawn from the  
distribution~$\dmut=\dmut(G,\tau,\tauord)$
which is uniform on states $M_0\in \Omega$ in which
exactly one vertex has the type in $\tau \setminus \{\tauord\}$.
This is the distribution that has been studied in previous works.

There is a  natural generalisation of~$\dmut$ to the multi-type Moran process
where $|\tau|=k>2$.  In this case, $\dmut$ is the uniform distribution on states $M_0 \in \Omega$ which have exactly one vertex assigned to every type $i\in \tau \setminus \{\tauord\}$.

Our results will apply to $\dmut$ and also to several other distributions. 
Let $V[k]$ be the set containing all $k$-tuples of distinct vertices
from~$V$  and 
let $\tau[k]$ be the set containing all $k$-tuples of distinct types in $\tau$.
For every $\vtuple\in V[k]$ and $\ttuple \in \tau[k]$, 
let $\Omega(\vtuple,\ttuple)$ be the set of states in $\Omega$ 
that map vertices in $\vtuple$ respectively to types in $\ttuple$.
When a state $S\in \Omega$ is drawn from~$\Dmut$, there is a clear lower bound on the probability that $S\in \Omega(\vtuple,\ttuple)$. Our results will apply to any distribution with this property. 
For concreteness, we identify a set of suitable distributions as follows.
The set of distributions $\calD = \calD(G,\tau)$ is
the set containing every distribution $D$ on $\Omega$  
that meets the following criteria.
\begin{itemize}
\item For every pair $(\vtuple,\ttuple) \in V[k] \times \tau[k]$,
$D_{\vtuple,\ttuple}$ is a distribution on $\Omega(\vtuple,\ttuple)$.
\item
For any $S\in \Omega$, 
$ \Pr_D(S) = \frac{1}{\strut |V[k] \times \tau[k]|} 
\sum_{(\vtuple,\ttuple) \in V[k] \times \tau[k]} \Pr_{D_{\vtuple,\ttuple}} (S)$.
\end{itemize}

It is easy to see that $\dmut(G,\tau,\tauord) \in \calD(G,\tau)$ since, in this case, $D_{\vtuple,\ttuple}$ can be taken to be the distribution containing the single state in 
$\Omega(\vtuple,\ttuple)$
such that every vertex outside of $\vtuple$ is assigned type $\tauord$.

Let $\Omegainit(G,\tau)$ be the set of all states $S\in \Omega$ 
with range~$\tau$. By construction, any sample $S$ drawn from a distribution $D\in \calD(G,\tau)$ is in $\Omegainit(G,\tau)$.

\subsection{Parameterised and Approximation Algorithms}

We will study the following problems concerned with the fixation probability of a type with maximum fitness.
\begin{definition}[$\maxfixprobmut$]
The problem $\maxfixprobmut$ takes as input a connected graph $G$, a set of types $\tau$ with ordinary type~$\tauord\in \tau$, an advantageous fitness function $f\colon \tau \to \mathbb{Q}_{\geq 1}$, and a type $\alpha\in \taumax{f}$.
The goal is to compute $\pi_{\alpha}(G,\tau,f,\dmut(G,\tau,\tauord))$.
\end{definition}

\begin{definition}[$\maxfixprob$]
The problem $\maxfixprob$ takes as input a connected graph $G$, a set of types $\tau$, a fitness function $f\colon \tau \to \mathbb{Q}_{\geq 1}$, a type $\alpha\in \taumax{f}$ and a distribution $D\in \calD(G,\tau)$ (via a black-box oracle which provides samples from~$D$).
The goal is to compute $\pi_{\alpha}(G,\tau,f,D)$.
\end{definition}

As explained in the Introduction, we are interested in efficient approximation algorithms for $\maxfixprob$ and $\maxfixprobmut$.

As is usual in the setting of parameterised algorithms
(and is natural in the case of this application),
we will assume that the number of types (and the encoding of their fitnesses) is significantly smaller than the size of the
input graph~$G$, which we denote by~$|G|$. Thus, we are happy to accept a factor 
in the running time that depends on $\tau$ and $f$, say $g(\tau,f)$ --- but we would not be happy with an algorithm whose running time is as high as $|G|^{g(\tau,f)}$ for any function~$g$.

Formally, we let $||f||$ be the description length of the fitness function~$f$. The exact details about how~$f$ is encoded will not be important here. We \emph{parameterise} our problems in terms of the parameter $\kappa := |\tau|+||f||$ and we aim to construct a \emph{fixed-parameter tractable approximation scheme (FPTRAS)}, which is the standard notion for efficient parameterised approximation algorithms
(see Arvind and Raman~\cite{ArvindR02}).
\begin{definition}[FPTRAS]
A \emph{fixed-parameter tractable randomised approximation scheme} (FPTRAS) for a problem $P$ is a randomised algorithm $\mathbb{A}$ that takes as input
a problem
input $x$ and reals $0<\varepsilon,\delta < 1$, and outputs $\hat{P}$ such that
$\Pr((1-\varepsilon) P(x) \leq \hat{P} \leq (1 + \varepsilon) P(x)) \geq 1 - \delta$.
Moreover, there must be a computable function $g$ such that the running time of $\mathbb{A}$ is bounded from above by
$g(\kappa) \cdot \mathsf{poly}(|x|,1/\varepsilon, \log (1/\delta))$, where $\kappa$ is the parameter of $x$.
\end{definition}

\subsection{Main Results}

We are now able to state our main results.

\newcommand{\stateThmMain}{ $\maxfixprob$ has an FPTRAS when parameterised by $\kappa := |\tau|+||f||$.}

\begin{theorem}\label{thm:main}
   \stateThmMain
\end{theorem}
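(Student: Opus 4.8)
The plan is to build an FPTRAS by direct Monte Carlo simulation, supported by two quantitative lemmas. Concretely, I would draw $N$ independent start states $M_0^{(1)},\dots,M_0^{(N)}$ from the oracle for $D$, run each simulation of $M(G,\tau,f,M_0^{(i)})$ until either a type fixates or a time bound $T$ is reached, and output the empirical frequency $\hat\pi$ with which $\alpha$ fixates. For this to be a $(1\pm\varepsilon)$-estimator with confidence $1-\delta$ using only $g(\kappa)\cdot\mathsf{poly}(n,1/\varepsilon,\log(1/\delta))$ samples, a standard Chernoff bound forces two requirements: (i) a lower bound $\pi_\alpha(G,\tau,f,D)\ge 1/(g(\kappa)\,\mathsf{poly}(n))$ whose polynomial degree is independent of $k=|\tau|$, so that $N=O(\varepsilon^{-2}\pi_{\min}^{-1}\log(1/\delta))$ stays within budget; and (ii) an upper bound $T^*$ on the expected absorption time, so that truncating at $T=\Theta(T^*/(\varepsilon\pi_{\min}))$ makes the probability of a non-absorbed run at most $\varepsilon\pi_\alpha/2$ by Markov, keeping the truncation bias negligible against the multiplicative target. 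Each simulation step is implementable in $\mathsf{poly}(n)$ time by maintaining the running total fitness $F(t)$ and sampling $v$ proportionally to $f(v,t)$ and then $w$ uniformly in $N(v)$.

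The first and, I expect, most technical ingredient is the absorption-time bound: I would show that for every connected $G$ and every advantageous $f$, the expected time until some type fixates is at most $g(\kappa)\cdot\mathsf{poly}(n)$. The approach is a potential-function (drift) argument generalising the two-type analysis of D\'iaz et al. A natural candidate weights the boundary structure by reciprocal degrees, giving a quantity that is $0$ exactly on the monochromatic states and that, because all fitnesses lie in $[f^-,f^+]$ with bounded ratio $f^+/f^-$, drifts toward absorption whenever the state is non-monochromatic. The obstacle is that with $k>2$ competing types the potential can both increase and decrease in intricate ways; I would control this either by lumping types to reduce to a bichromatic comparison, or by bounding the expected number of steps during which the set of present types fails to shrink and multiplying by the at most $k-1$ extinction events. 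Getting the dependence on $\kappa$ (in particular on $f^+/f^-$) right while keeping the $n$-exponent an absolute constant is the crux of this step, and this is the bound advertised as a contribution in its own right.

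The second ingredient is the lower bound on $\pi_\alpha$, and here the hypothesis $\alpha\in\taumax{f}$ is essential. The key point is a monotonicity/coupling step: since every non-$\alpha$ type has fitness at most $f^+=f(\alpha)$, raising all of those fitnesses to $f^+$ and merging the non-$\alpha$ types into a single opponent type can only make $\alpha$ spread less easily, so $\pi_\alpha$ in the true process dominates $\pi_\alpha$ in the resulting \emph{neutral} two-type process. For neutral drift a suitably weighted vertex count $W(S)=\sum_{v\in S}w(v)$ with $w>0$ is a martingale, giving $\rho(v)=w(v)/\sum_u w(u)$ for the single-vertex fixation probabilities, whence $\sum_v\rho(v)=1$ and some vertex $v^*$ satisfies $\rho(v^*)\ge 1/n$. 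By the domination step, the true fixation probability from \emph{any} start state placing $\alpha$ at $v^*$ is then at least $\rho(v^*)\ge 1/n$. Finally, the defining mixture structure of $\calD$ guarantees that a sample from $D$ places $\alpha$ at $v^*$ with probability at least $c/n$ for a constant $c>0$ — a marginal event, \emph{not} an $n^{-k}$ event — because, under the uniform average over $(\vtuple,\ttuple)\in V[k]\times\tau[k]$, the type $\alpha$ is equally likely to be assigned to each vertex. Multiplying yields $\pi_\alpha(G,\tau,f,D)\ge 1/(g(\kappa)\,\mathsf{poly}(n))$ with a $k$-independent exponent.

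With both lemmas in hand, the Chernoff and Markov estimates of the first paragraph close the argument: the total running time is $g(\kappa)\cdot\mathsf{poly}(n,1/\varepsilon,\log(1/\delta))$ and the output is an $\varepsilon$-approximation of $\pi_\alpha(G,\tau,f,D)$ with probability at least $1-\delta$. I expect essentially all of the difficulty to sit in the absorption-time bound, with the fixation lower bound hinging on the single clean observation that the uniform placement enforced by $\calD$ lets us exploit the best vertex of the dominated neutral process.
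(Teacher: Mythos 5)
Your overall architecture coincides with the paper's: Monte Carlo simulation of $M(G,\tau,f,D)$, made efficient by (a) a polynomial lower bound on $\pi_\alpha(G,\tau,f,D)$ and (b) an FPT bound on the expected absorption time, with Chernoff for concentration. Within that architecture you make two genuinely different choices that are worth noting. For the fixation lower bound, the paper couples the $k$-type process with the \emph{advantageous} two-type process obtained by raising all non-$\alpha$ fitnesses to the second-largest fitness (Corollary~\ref{cor:tech}), then couples $D$ with $\Dmut$ and cites a known domination theorem to get $\pi_\alpha\geq 1/n$. You instead push the coupling all the way to the \emph{neutral} two-type process (legitimate, since Lemma~\ref{lem:coupling} permits $f'(\beta')=f(\alpha)$ for all $\beta'\neq\alpha$) and compute fixation probabilities there exactly via the harmonic martingale $W(S)=\sum_{v\in S}1/d(v)$, isolating a vertex $v^*$ with $\rho(v^*)\geq 1/n$ and then using the mixture structure of $\calD$ to show $\Pr_D[M_0(v^*)=\alpha]\geq 1/n$. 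This is self-contained and correct but yields $1/n^2$ rather than $1/n$; both suffice for an FPTRAS. Likewise, your per-run truncation at $T=\Theta(T^*/(\varepsilon\pi_{\min}))$ with a bias argument replaces the paper's device of running $\mathbb{A}'$ for $8$ times its expected time, returning $0$ on timeout, and amplifying by a median; both are standard and both work.

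The one place where your sketch is materially thinner than what is actually needed is the absorption-time bound, and your stated dependence on the ratio $f^+/f^-$ is not the right quantity. The drift of the reciprocal-degree potential for the top type is of order $(1-f^\ast/f^+)/n^3$ where $f^\ast$ is the \emph{second-largest} fitness, and the inductive peeling of types (your ``at most $k-1$ extinction events'') produces a bound of the form $\sum_i \frac{f_i}{f_i-f_{i-1}}(n+1)n^3$ governed by the reciprocal \emph{consecutive} gaps, which can be enormous even when $f^+/f^-$ is close to $1$ (they are controlled only through $\|f\|$, i.e.\ through $\kappa$). Moreover, when two or more types share a fitness the drift vanishes and one needs a separate neutral-drift argument (the paper imports the $O(n^6)$ bound of D\'iaz et al.\ and pays $(\ell-1)n^6$ for $\ell$ tied types). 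Your ``lumping'' and ``count the extinction events'' ideas are exactly the right strategy, but as written they would not go through verbatim for tied or nearly-tied fitnesses; this refinement is where the paper's Section~\ref{sec:absorption} does its real work.
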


\newcommand{\stateThmMaxfixMutHasFptras}{$\maxfixprobmut$ has an FPTRAS when parameterised by $\kappa := |\tau|+||f||$.}
\begin{theorem} \label{thm:maxfix-mut-has-fptras}
   \stateThmMaxfixMutHasFptras
\end{theorem}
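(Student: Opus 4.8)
The plan is to derive Theorem~\ref{thm:maxfix-mut-has-fptras} as an immediate consequence of Theorem~\ref{thm:main} by exhibiting $\maxfixprobmut$ as a special case of $\maxfixprob$. Recall that the input to $\maxfixprobmut$ is a connected graph $G$, a type set $\tau$ with ordinary type $\tauord$, an advantageous fitness function $f$, and a type $\alpha \in \taumax{f}$, and the goal is to approximate $\pi_\alpha(G,\tau,f,\dmut(G,\tau,\tauord))$. The only formal difference from $\maxfixprob$ is that $\maxfixprob$ takes an arbitrary distribution $D \in \calD(G,\tau)$ supplied via a sampling oracle, whereas here the distribution is fixed to be $\dmut(G,\tau,\tauord)$.

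First I would invoke the observation, already recorded in the excerpt, that $\dmut(G,\tau,\tauord) \in \calD(G,\tau)$, with $D_{\vtuple,\ttuple}$ taken to be the single state in $\Omega(\vtuple,\ttuple)$ that assigns $\tauord$ to every vertex outside of $\vtuple$. This means $\dmut$ is an admissible distribution for the more general problem, so its fixation probability is exactly the quantity that the algorithm of Theorem~\ref{thm:main} is designed to approximate.

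Second, I would supply the sampling oracle that Theorem~\ref{thm:main} requires. Sampling a state $S$ from $\dmut(G,\tau,\tauord)$ is straightforward: fix any ordering $i_1,\dots,i_{k-1}$ of the $k-1$ mutant types in $\tau \setminus \{\tauord\}$, draw a uniformly random tuple of $k-1$ distinct vertices $(v_1,\dots,v_{k-1})$ from $V$, set $S(v_j) = i_j$ for each $j$, and set $S(u) = \tauord$ for every remaining vertex $u$. Since the type ordering is fixed, each state in the support of $\dmut$ arises from exactly one such tuple, so this procedure produces precisely the uniform distribution on that support. It runs in time polynomial in $|G|$ and $|\tau|$, so it is a valid black-box oracle.

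Finally, I would run the FPTRAS of Theorem~\ref{thm:main} on the instance $(G, \tau, f, \alpha, \dmut(G,\tau,\tauord))$ with the oracle above, passing the accuracy and confidence parameters $\varepsilon,\delta$ through unchanged. The parameter $\kappa = |\tau| + ||f||$ is identical for both problems, and the added cost of the sampling oracle is polynomial in the input size, so the overall running time retains the form $g(\kappa)\cdot \mathsf{poly}(|x|, 1/\varepsilon, \log(1/\delta))$ demanded by the definition of an FPTRAS. There is no genuine obstacle here; the entire content lies in Theorem~\ref{thm:main}, and this argument merely checks that $\maxfixprobmut$ fits its hypotheses. The one point needing a little care is verifying that the elementary sampling routine is both efficient and produces exactly $\dmut$, rather than some approximation of it.
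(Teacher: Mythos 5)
Your proposal is correct and matches the paper's own argument: the paper likewise proves Theorem~\ref{thm:maxfix-mut-has-fptras} by observing that $\dmut(G,\tau,\tauord)\in\calD(G,\tau)$ and that one can sample from it directly in polynomial time, so the algorithm of Theorem~\ref{thm:main} applies with the oracle replaced by this explicit sampler. Your extra check that the fixed type-ordering makes the sampling procedure exactly uniform on the support of $\dmut$ is a reasonable detail the paper leaves implicit.
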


In the course of proving Theorems~\ref{thm:main} and~\ref{thm:maxfix-mut-has-fptras} we also obtain upper bounds on the absorption time, i.e. the time that the process runs until a type fixates.
For the 
multi-type Moran process, this upper bound is of the form $g(|\tau|+||f||)\cdot \mathsf{poly}(|G|)$ (see Theorem~\ref{thm:absorption_time}). 

\subsection{Related Work}\label{sec:related_work}

\subsubsection*{Approximation algorithms for the 2-type spatial Moran process}

Work on approximating the fixation probability in the spatial Moran process has so far focused on the 2-type process, for which D{\'i}az et al.\ \cite{diaz-ApproximatingFixationProbabilities-2014} developed the first FPRAS. Denoting the types by $\tau= \{ 1,2\}$ with $\tauord = 1$ being the ordinary type, the idea to approximate the fixation probability $\pi_2$ of the mutation is to perform sufficiently many Markov chain Monte Carlo simulations of the process, where the number of steps and simulations can be bounded by proving (a) an upper bound on the expected absorption time and (b) a lower bound on the fixation probability $\pi_2$. 
Their work applies to the case where the mutation is advantageous, so $\taumin{f} = \{1\}$ and $f(2) > f(1)$. Note that when the mutation is disadvantageous, its fixation probability $\pi_2$ can be exponentially small as a function of the size of the graph, even when the graph is the complete graph, so this approach to approximation is infeasible.

In a first improvement, Chatterjee, Ibsen-Jensen, and Nowak~\cite{chatterjee-FasterMontecarloAlgorithms-2017} exploit the observation that a vertex reproducing to a neighbour of the same type does not change the overall state. By simulating only those steps that change the underlying Markov chain's state, they speed the FPRAS up.

The effect of $f(2)/f(1)$ on the fixation probability $\pi_2$ in \emph{every} connected, undirected graph was characterised by Goldberg, Lapinskas, and Richerby~\cite{anngoldberg-PhaseTransitionsMoran-2020} by demonstrating a phase transition in $\pi_2$ as a function of $f(2)/f(1)$. This phase transition also entails an improved lower bound on~$\pi_2$ in the advantageous case.  Combining this bound with early termination of the Monte Carlo simulation once fixation is sufficiently likely, they provide a faster FPRAS.

\subsubsection*{Multiple types in related models}

The idea of modelling multiple mutation types has been explored for both the Wright-Fisher model~\cite{beerenwinkel-GeneticProgressionWaiting-2007} and, more recently, the non-spatial Moran process~\cite{etheridge-CoalescentDualProcess-2009,ferreira-FixationProbabilitiesMoran-2020,chao-SpatialMutationModel-2021}. 
The Wright-Fisher model~\cite{fisher-DominanceRatio-1923,wright-EvolutionMendelianPopulations-1931} is a non-spatial model with a coarse-grained time scale where all individuals reproduce simultaneously, resulting in non-overlapping generations. This model served as the predecessor of the more fine-grained Moran process; see Lanchier~\cite{lanchier-WrightFisherMoran-2017} for a detailed definition and comparison to the non-spatial Moran process.

For the Wright-Fisher model, Beerenwinkel, Antal, Dingli, Traulsen, Kinzler, Velculescu, Vogelstein and Nowak~\cite{beerenwinkel-GeneticProgressionWaiting-2007} provide a generalisation to account for the evolution of multiple types: They consider a sequentially ordered  set $\tau$ of 
infinitely many types, where higher types have higher fitness. Initially, all individuals are of some ordinary type $\tauord$. Besides reproduction happening generation by generation, individuals can progress to a higher type at some rate
which depends on how close the initial type is to the new one in the ordering of~$\tau$.
Beerenwinkel et al.\ perform computer simulations of this process which models the evolution of cancer cells. Their goal is to approximate the expected 
time until a cancer phenotype appear, which they define as the  waiting time 
until some fixed type occurs as a mutation. The focus of their model and simulations is thus on the progression of types.

Etheridge and Griffiths~\cite{etheridge-CoalescentDualProcess-2009} define for the non-spatial Moran process a finite set of types $\tau$ where, in addition to the reproduction dynamics, 
mutation happens between different types at given rates. Chao and Schweinsberg~\cite{chao-SpatialMutationModel-2021} extend the study to a population on a continuous torus, where the focus is on the progression between types  (similar to Beerenwinkel et al.~\cite{beerenwinkel-GeneticProgressionWaiting-2007}).

For the Moran process without spatial dynamics and without progression between types, Ferreira and Neves~\cite{ferreira-FixationProbabilitiesMoran-2020} introduce in recent work the concept of multiple types. 
Their model is the same as ours except that they focus on the special case where the graph $G$ is the complete graph.
They provide bounds on the fixation probabilities in the 3-type case. In the context of our work on the \emph{spatial} Moran process, these bounds can be derived by applying our coupling in the special case 
where $G$ is a complete graph and   $|\tau| = 3$,   as we show in Lemma~\ref{lem:complete}.

\section{The Continuous Moran Process}  

Although our main results are about the discrete-time Moran process, it is useful in proofs to  consider a continuous-time version of the Moran process.

We use $\Exp(\lambda)$ to denote an exponential distribution with parameter~$\lambda$.
Let $M$ be the (discrete-time) Moran process 
corresponding to $G=(V,E)$, $\tau$, $f$, and $M_0$. 
Following the work of Diaz, Goldberg, Richerby, and Serna~\cite{diaz-AbsorptionTimeMoran-2016} 
in the case where $|\tau|=2$, we define
a continuous-time Moran process~$\tildeM = \tildeM(G,\tau,f,M_0)$ corresponding to~$M$.  
The process $\tildeM$ has the same state space as $M$,
namely the set of function~$\Omega$ from~$V$ to~$\tau$.
and it starts with $\tildeM_0 = M_0$. 
The evolution of $\tildeM$ is guided by a sequence $T_0,T_1,\ldots$ of non-negative real numbers with $T_0=0$.
For any non-negative integer~$i$, the
evolution from $\tildeM_{T_i}$ proceeds as follows.
Each vertex $v\in V$ is equipped with a ``clock'' $c^i_{v}$ which is an exponentially distributed random variable  with parameter~$\tilde{f}(v,T_i) := 
f(\tildeM_{T_i}(v))$. The distributions of the clocks $c^i_v$
are mutually independent.  Let $\Delta_i := \min_{w \in V} c^i_w$ and $T_{i+1} = T_i + \Delta_i$.
For $t \in [T_i,T_{i+1})$,  we define $\tildeM_t = \tildeM_{T_i}$.
We define $\tildeM_{T_{i+1}}$  
as follows. Let $v := \arg \min \{c^i_u \mid u\in V\}$.   
Choose $w\in N(v)$ with probability $1/d(v)$. 
Then 
$\tildeM_{T_{i+1}} := \tildeM_{T_i}\vert_{v\rightarrow w}$.

The reason that D{\'i}az et al.~\cite{diaz-AbsorptionTimeMoran-2016} chose the exponential distribution for the clocks
is that it is memoryless, so it is convenient for couplings.
Moreover, the minimum of a set $S$ of exponential distributions is 
exponentially distributed with a new parameter that is  the sum of the parameters in $S$. Therefore, for all $v$ and $T_i$,  
 $
    \Pr(\Delta_i = c^i_v) =  {\tilde{f}(v,T_i)}/{\tildeF(T_i)}  
 $ where $\tildeF(T_i) = \sum_{v\in V} \tildef(v,T_i)$. This is the same as the probability that $v$ is chosen for reproduction
 in a discrete step starting from $\tildeM_{T_i}$.
 Thus, the process $\tildeM_{T_0},\tildeM_{T_1},\ldots$ is a faithful copy of $M$.
It follows that $\tilde\pi_j(G,\tau,f,\tildeM_0)$, the probability that $j$~fixates in $\tildeM$, is equal to $\pi_j(G,\tau,f,M_0)$.

\section{Coupling Continuous Multi-Type Moran Processes}

 The following lemma is a generalisation of \cite[Lemma 5]{diaz-AbsorptionTimeMoran-2016}, which was for the case $k=2$.

\begin{lemma} \label{lem:fixation:coupling}\label{lem:coupling}
Let $G=(V,E)$ be a graph, let $\tau$ be a set of types, and let $f,f'\colon\tau \to \mathbb{Q}_{\geq 1}$ be   fitness functions. Suppose that 
there is a type $\alpha \in  \tau$ with  $f(\alpha)=f'(\alpha)$ and that, for any types $\beta,\beta' \in \tau \setminus \{\alpha\}$,
$f(\beta) \leq f'(\beta')$.
Let $\tildeM_0$ and $\tildeM'_0$ be states in the set $\Omega$ of functions from $V$ to~$\tau$.
Let $\tildeM = \tildeM(G,\tau,f,\tildeM_0)$ and $\tildeM' = \tildeM'(G,\tau,f',\tildeM'_0)$ be continuous-time Moran processes. For all $j\in \tau$, let $\tildeV_j(t)=\{v \in V \mid \tildeM_t(v) = j\}$ and 
$\tildeV'_j(t)=\{v \in V \mid \tildeM'_t(v) = j\}$.
Suppose that $V_{\alpha}'(0)\subseteq V_{\alpha}(0)$.
Then
 there is a coupling between $\tildeM$ and $\tildeM'$ such that, for any  time $t \in \nonnegreals $, $V_{\alpha}'(t)\subseteq V_{\alpha}(t)$.
	\end{lemma}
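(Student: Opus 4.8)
\section{Proof strategy for Lemma~\ref{lem:coupling}}

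The plan is to build an explicit monotone coupling of the two continuous-time chains that maintains the invariant $\tilde{V}'_\alpha(t)\subseteq \tilde{V}_\alpha(t)$ for all $t$, given that it holds at $t=0$. Since both $\tilde{M}$ and $\tilde{M}'$ are finite-state continuous-time chains with bounded jump rates, it suffices to describe, event by event, a joint evolution whose two marginals are faithful copies of $\tilde{M}$ and $\tilde{M}'$ and under which no single transition can destroy the invariant. The memorylessness of the exponential clocks (and the fact that the minimum of independent exponentials is again exponential) lets me re-set up the coupling afresh from the current pair of states after each event, so I only ever have to reason locally about one jump.

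Given a time $t$ at which the invariant holds, I would partition $V$ into three classes according to the $\alpha$-status of each vertex in each chain: $C_1=\tilde{V}'_\alpha(t)$ (vertices $\alpha$ in both chains, which by the invariant is exactly the $\alpha$-vertices of the primed chain), $C_2=\tilde{V}_\alpha(t)\setminus \tilde{V}'_\alpha(t)$ (vertices $\alpha$ only in the unprimed chain), and $C_3=V\setminus \tilde{V}_\alpha(t)$ (vertices non-$\alpha$ in both); the invariant rules out the remaining combination of non-$\alpha$ in the unprimed chain but $\alpha$ in the primed chain. For $v\in C_1$ I use a single shared clock of rate $f(\alpha)=f'(\alpha)$ and, when it rings, reproduce onto the same uniformly chosen neighbour $w$ in both chains, so $w$ becomes $\alpha$ in both. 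For $v\in C_3$ the unprimed rate $q_v=f(\tilde{M}_t(v))$ and primed rate $p_v=f'(\tilde{M}'_t(v))$ satisfy $q_v\le p_v$ by the hypothesis $f(\beta)\le f'(\beta')$ applied to the (possibly distinct) non-$\alpha$ types $\beta=\tilde{M}_t(v)$ and $\beta'=\tilde{M}'_t(v)$; I drive the primed clock at rate $p_v$ and thin it with probability $q_v/p_v$ to obtain the unprimed clock, so the unprimed rings form a subset of the primed rings and, on a common ring, both chains reproduce onto the same neighbour. For $v\in C_2$ I let the two clocks (unprimed rate $f(\alpha)$, primed rate $f'(\tilde{M}'_t(v))$) run independently.

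The verification then reduces to checking that each of the five event types preserves $\tilde{V}'_\alpha\subseteq \tilde{V}_\alpha$, which I would do by tracking which of the two monitored sets can change: a $C_1$ event adds the same $w$ to both sets; a $C_2$ unprimed event only enlarges $\tilde{V}_\alpha$ while a $C_2$ primed event only shrinks $\tilde{V}'_\alpha$; a $C_3$ common event removes the same $w$ from both sets, while a $C_3$ primed-only event only shrinks $\tilde{V}'_\alpha$. In every case the superset grows or stays fixed and the subset shrinks or stays fixed, so containment is preserved. Conceptually, the two dangerous transitions are (A) a vertex becoming $\alpha$ in the primed chain while not $\alpha$ in the unprimed chain, and (B) a vertex becoming non-$\alpha$ in the unprimed chain while $\alpha$ in the primed chain. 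The coupling rules out (A) because every $\alpha$-reproduction in the primed chain originates from a $C_1$ vertex (an $\alpha$-vertex of the primed chain lies in $\tilde{V}'_\alpha\subseteq \tilde{V}_\alpha$) and is mirrored in the unprimed chain, and rules out (B) because every non-$\alpha$-reproduction in the unprimed chain originates from a $C_3$ vertex (a $C_2$ vertex is $\alpha$ in the unprimed chain) and, being a thinned copy of a primed ring, is mirrored there.

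I expect the main obstacle to be the rigorous continuous-time bookkeeping rather than the combinatorics: one must argue that the thinned clocks on $C_3$, the shared clocks on $C_1$, and the independent clocks on $C_2$ together yield two processes whose marginal laws are exactly $\tilde{M}$ and $\tilde{M}'$, and that re-partitioning into $C_1,C_2,C_3$ after each event is legitimate by memorylessness. The one genuinely substantive modelling point is that the hypothesis must compare $f(\beta)$ and $f'(\beta')$ across \emph{all} pairs of non-$\alpha$ types, not merely type-by-type, precisely because a $C_3$ vertex may carry different non-$\alpha$ types in the two chains; it is this uniform domination $\max_{\beta\neq\alpha}f(\beta)\le \min_{\beta'\neq\alpha}f'(\beta')$ that makes the thinning coupling of the $C_3$ clocks possible.
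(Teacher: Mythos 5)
Your proposal is correct and is essentially the paper's own argument: the paper also builds the monotone coupling event by event from exponential clocks, using a shared clock of rate $\min(\tilde f(v),\tilde f'(v))$ plus a residual clock of rate $|\tilde f(v)-\tilde f'(v)|$ at every vertex, which on your classes $C_1$ and $C_3$ coincides exactly with your shared clock and your thinning construction, and then verifies the invariant by the same case analysis (your dangerous transitions (A) and (B) are the paper's Cases~1 and~2). The only cosmetic difference is that on $C_2$ vertices you run the two clocks independently rather than via min-plus-residual; both choices give the correct marginals and trivially preserve the containment, so nothing of substance changes.
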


\begin{proof}

For convenience, let  
$\tildef(v,T_i) := f(\tildeM_{T_i}(v))$ and 
$\tildef'(v,T_i) := f(\tildeM'_{T_i}(v))$.  
The joint process is guided by a sequence $T_0,T_1,\dots$ of non-negative real numbers with $T_0 = 0$. At time $T_0$, both processes are in their respective initial states $\tildeM_0$ and $\tildeM'_0$.  For any non-negative integer $i$, the evolution of the joint process from time $T_i$ is defined as follows. First, we define some exponentially-distributed random variables.
\begin{align*}
        \forall v \in V: \quad C^0_v(i) &\sim \Exp(\min\left(\tildef(v,T_i), \tildef'(v,T_i)\right)).\\
        \forall v \in V \text{ with } \tildef(v,T_i) > \tildef'(v,T_i):\quad  C_v^1(i) &\sim \Exp(\tildef(v,T_i) - \tildef'(v,T_i)).\\
        \forall v \in V \text{ with } \tildef(v,T_i) < \tildef'(v,T_i):\quad  C_v^2(i) &\sim \Exp(\tildef'(v,T_i)- \tildef(v,T_i)).
    \end{align*}

The earliest of these random variables is the random variable $\dt(i)$  defined by
 $$\dt(i) := \min   \left\{C^0_v(i) \mid  v \in V\right\} \cup \left\{C_v^1(i) \mid \tildef(v,T_i) > \tildef'(v,T_i) \right\}\cup \left\{C_v^2(i) \mid \tildef(v,T_i) < \tildef'(v,T_i)\right\}. $$

Let $T_{i+1} = T_i + \dt(i)$.   The two chains evolve from $T_i$  as follows. 
Both $\tildeM$ and $\tildeM'$ stay in their respective states over the interval $[T_i, T_{i+1})$.
At time~$T_{i+1}$ the state changes as follows.
\begin{description}
\item[Case 0.] If $\dt(i) = C^0_v(i)$ for some $v\in V$,
select $w \in N(v)$ u.a.r.{} and set $\tildeM_{T_{i+1}} := \tildeM_{T_{i}}\vert_{v \rightarrow w}$
and $\tildeM'_{T_{i+1}} := \tildeM'_{T_{i}}\vert_{ v \rightarrow w}$

\item[Case 1.] If $\dt(i) = C_v^1(i)$ for some $v\in V$,  select  $w \in N(v)$ u.a.r{} and set $\tildeM_{T_{i+1}} := \tildeM_{T_{i}}\vert_{v \rightarrow w}$
and $\tildeM'_{T_{i+1}} := \tildeM'_{T_{i}}$.

 \item[Case 2.] If $\dt(i) = C_v^2(i)$ for some $v \in V$,   select  $w \in N(v)$ u.a.r{} and 
  set $\tildeM_{T_{i+1}} := \tildeM_{T_{i}}$
and $\tildeM'_{T_{i+1}} := \tildeM'_{T_{i}}\vert_{ v \rightarrow w}$.

    \end{description} 
It is easy to see that this joint evolution is a correct evolution for the individual chains~$\tildeM$ and~$\tildeM'$. For example, consider $\tildeM$ (the argument for~$\tildeM'$ is symmetric). We will show that for every $v\in V$, the time until $v$ is chosen for reproduction is exponentially distributed with parameter $\tildef(v,T_i)$. 
If $\tildef(v,T_i) \leq \tildef'(v,T_i)$ this follows since $C^0_v(i)$
is exponentially distributed with parameter $\tildef(v,T_i)$ and $\tildeM$ does not evolve in Case~2.
Otherwise, it follows since  
    $\min \{C_v^0(i), C_v^1(i)\}$ 
is exponentially distributed with parameter $\tildef(v,T_i)$. 
Now assume  $V_{{\alpha}}'(T_i)\subseteq V_{{\alpha}}(T_i)$. To finish the proof, we will show $V_{{\alpha}}'(T_{i+1})\subseteq V_{{\alpha}}(T_{i+1})$. Specifically, we
will rule out the possibility  that  $w \in V_{\alpha}'(T_{i+1}) \setminus V_{\alpha}(T_{i+1})$.

In Case~0 the only way that $w\in V_{\alpha}'(T_{i+1})$ is that 
$v\in V_{\alpha}'(T_i)$.  This implies 
$v\in V_{\alpha}(T_i)$ so
$w\in V_{\alpha}(T_{i+1})$, as required.

In Case~1 the only way that  $w \in V_{\alpha}'(T_{i+1}) \setminus V_{\alpha}(T_{i+1})$
is that   
$v\notin V_{\alpha}(T_i)$ so $v\notin V_{\alpha}'(T_i)$. But then Case~1 requires
$\tildef(v,T_i) > \tildef'(v,T_i)$ and this is not possible since $v\notin V_{\alpha}(T_i)$ 
and $v\notin V_{\alpha}'(T_i)$
(given the condition about types $\beta,\beta'$ in the lemma statement).

In Case~2 the only way that $w\in V_{\alpha}'(T_{i+1})$ is that 
$v\in V_{\alpha}'(T_i)$ so $v\in V_{\alpha}(T_i)$.  But then we do not have $\tildef(v,T_i) < \tildef'(v,T_i)$ as required for Case~2. 
\end{proof}

Lemma~\ref{lem:coupling} implies Corollary~\ref{cor:lower_bound_probability} which gives a lower bound on 
the probability that a maximally-fit mutation fixates in distributions in~$\calD$.
In order to prove Corollary~\ref{cor:lower_bound_probability} we 
will use Corollary~\ref{cor:tech}, which reduces
the problem of lower-bounding the fixation probability in the multi-type Moran process to
the problem of lower-bounding the fixation probability in the two-type Moran process.
We will also use  
Lemma~\ref{lem:oldstuff}, which is
a consequence of earlier work on the two-type Moran process.

 \begin{corollary}\label{cor:tech}
Let $G=(V,E)$ be a connected graph, let $\tau$ be a set of types, and let $f\colon \tau \to \mathbb{Q}_{\geq 1}$ be a fitness function. Let $\alpha$ be a type in~$\tau$.
Let $\beta = \arg\max\{f(j) \mid j\in \tau\setminus\{\alpha\}\}$.
Let $\tau'=\{\alpha,\beta\}$.
Let $\Omega$ be the set of functions from $V$ to $\tau$ and let $\Omega'$ be the set of functions from $V$ to $\tau'$. Let $g\colon \Omega \to \Omega'$ be defined as follows.
For every state $S\in \Omega$, $g(S)$ is the state  
such that $g(S)(v) = \alpha$ if $S(v)=\alpha$ and $g(S)(v)=\beta$, otherwise.
Then, for every 
state $M_0 \in \Omega$, $\pi_\alpha(G,\tau,f,M_0)\geq \pi_\alpha(G,\tau',f,g(M_0))$.
 \end{corollary}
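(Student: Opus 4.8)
The plan is to reduce the comparison to a single application of the coupling Lemma~\ref{lem:coupling}, after first replacing the two-type process on $\tau'$ by an equivalent process on the full type set $\tau$ in which all non-$\alpha$ types are given the same fitness. Concretely, I would introduce an auxiliary fitness function $f'\colon \tau \to \mathbb{Q}_{\geq 1}$ defined by $f'(\alpha) := f(\alpha)$ and $f'(j) := f(\beta)$ for every $j \in \tau \setminus \{\alpha\}$. Since $\beta$ maximises $f$ over $\tau \setminus \{\alpha\}$, for any non-$\alpha$ types $j, j'$ we have $f(j) \leq f(\beta) = f'(j')$, so the pair $(f, f')$ satisfies the hypotheses of Lemma~\ref{lem:coupling} with $\alpha$ as the distinguished type, $f$ unprimed and $f'$ primed.

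The first key step is to observe that, for the fixation of $\alpha$, the process $M(G, \tau, f', M_0)$ is indistinguishable from the two-type process $M(G, \tau', f, g(M_0))$; that is, $\pi_\alpha(G, \tau, f', M_0) = \pi_\alpha(G, \tau', f, g(M_0))$. The reason is that under $f'$ every non-$\alpha$ vertex has exactly the same fitness $f(\beta)$, so the transition probabilities depend only on the partition of $V$ into $\alpha$-vertices and non-$\alpha$-vertices. Thus the chain induced on this partition is a lumping of $M(G,\tau,f',M_0)$, and it coincides with the two-type chain $M(G,\tau',f,\cdot)$ in which the single non-$\alpha$ type is $\beta$; the start state $g(M_0)$ is precisely the image of $M_0$ under this lumping, since $g$ fixes the $\alpha$-vertices and collapses all other vertices to $\beta$. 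Consequently the law of $V_\alpha(\cdot)$ — and in particular the event that $\alpha$ fixates — is identical in the two processes.

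The second key step is the coupling itself. Taking both processes to start from $\tildeM_0 = \tildeM'_0 = M_0$, we trivially have $\tildeV'_\alpha(0) = \tildeV_\alpha(0)$, so Lemma~\ref{lem:coupling} supplies a coupling of the continuous-time processes $\tildeM(G,\tau,f,M_0)$ and $\tildeM'(G,\tau,f',M_0)$ with $\tildeV'_\alpha(t) \subseteq \tildeV_\alpha(t)$ for all $t \in \nonnegreals$. In particular, whenever $\alpha$ fixates in the primed process (so $\tildeV'_\alpha(t) = V$) it also fixates in the unprimed process, whence $\tildepi_\alpha(G,\tau,f',M_0) \leq \tildepi_\alpha(G,\tau,f,M_0)$. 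Invoking the established equality between the continuous- and discrete-time fixation probabilities then gives $\pi_\alpha(G,\tau,f',M_0) \leq \pi_\alpha(G,\tau,f,M_0)$, and chaining this with the equality from the first step yields the claim $\pi_\alpha(G,\tau,f,M_0) \geq \pi_\alpha(G,\tau',f,g(M_0))$.

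The step I expect to require the most care is the lumping argument of the first step: one must check that merging the non-$\alpha$ types genuinely preserves the law of the $\alpha$-set, not merely its marginal fixation probability. This is exactly where equal fitness of all non-$\alpha$ types under $f'$ is essential — it guarantees that the reproduction probabilities $f'(v,t)/F'(t)$, and hence the evolution of the $\alpha/\text{non-}\alpha$ partition, are unaffected by which specific non-$\alpha$ labels the vertices carry. Making this precise (for instance by exhibiting an explicit label-forgetting coupling between the two chains, or by appealing to exact lumpability of Markov chains) is routine, but it is the only place where something must be verified rather than quoted.
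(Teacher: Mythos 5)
Your proposal is correct and follows essentially the same route as the paper's proof: introduce the auxiliary fitness function $f'$ that levels all non-$\alpha$ fitnesses to $f(\beta)$, identify $\pi_\alpha(G,\tau,f',M_0)$ with $\pi_\alpha(G,\tau',f,g(M_0))$, and then apply Lemma~\ref{lem:coupling} together with the continuous/discrete equivalence. The only difference is that you spell out the lumping argument that the paper states without justification, which is a reasonable elaboration rather than a departure.
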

\begin{proof}
There are three steps to the argument. First let  $f'\colon \tau \to \mathbb{Q}_{\geq 1}$ 
be the fitness function with $f'(\alpha)=\alpha$ and, for any $\beta' \in \tau\setminus \{\alpha\}$,
$f'(\beta') = f(\beta)$.
Note that for any $M_0\in \Omega$,
$\pi_\alpha(G,\tau',f,g(M_0)) = \pi_\alpha(G,\tau,f',M_0)$, so the goal is to
prove $\pi_\alpha(G,\tau,f,M_0)\geq \pi_\alpha(G,\tau,f',M_0)$.

Second, let $\tildeM = \tildeM(G,\tau,f,M_0)$ and let $\tildeM' = \tildeM'(G,\tau,f',M_0)$.
For all $j\in \tau$, let $\tildeV_j(t)=\{v \in V \mid \tildeM_t(v) = j\}$ and 
$\tildeV'_j(t)=\{v \in V \mid \tildeM'_t(v) = j\}$.
Applying Lemma~\ref{lem:coupling} 
we find that 
for any  time $t \in \nonnegreals$, $\tildeV_{\alpha}'(t)\subseteq \tildeV_{\alpha}(t)$.
so 
$\tildepi_\alpha(G,\tau,f,M_0) \geq \tildepi_\alpha(G,\tau,f',M_0)$.

Finally, recall
(from the construction of the continuous Moran process) that 
$\tildepi_\alpha(G,\tau,f,M_0) = \pi_\alpha(G,\tau,f,M_0)$
and
$\tildepi_\alpha(G,\tau,f',M_0) = \pi_\alpha(G,\tau,f',M_0)$.
Putting these together proves    the corollary.   
\end{proof}

\begin{lemma} \label{lem:oldstuff} 
Let $G=(V,E)$ be a connected graph, let $\tau$ be a set of types with $|\tau|=2$, and let $f\colon \tau \to \mathbb{Q}_{\geq 1}$ be a fitness function. Let $\alpha$
be a type in $\taumax{f}$.  Let $D$ be a distribution in $\calD(G,\tau)$.  Then 
$\pi_{\alpha}(G,\tau,f,D) \geq 1/n$.
\end{lemma}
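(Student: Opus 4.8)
The plan is to reduce an arbitrary distribution $D\in\calD(G,\tau)$ to the single-mutant case and then invoke the known two-type bound. Since $|\tau|=2$ and $\alpha\in\taumax{f}$, write $\tau=\{\tauord,\alpha\}$ with $f(\alpha)\ge f(\tauord)$. For each $v\in V$ let $S_v\in\Omega$ be the state in which $v$ has type $\alpha$ and every other vertex has type $\tauord$, and put $\rho_v:=\pi_\alpha(G,\tau,f,S_v)$. Because $\dmut(G,\tau,\tauord)$ is exactly the uniform distribution on $\{S_v:v\in V\}$, the first thing I would record is $\pi_\alpha(G,\tau,f,\dmut)=\tfrac1n\sum_{v\in V}\rho_v$. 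The first ingredient I would then use is the classical lower bound for the two-type process on a connected graph with an (at least) equally fit mutant, namely $\pi_\alpha(G,\tau,f,\dmut)\ge 1/n$ (D\'iaz et al.~\cite{diaz-ApproximatingFixationProbabilities-2014}, see also \cite{anngoldberg-PhaseTransitionsMoran-2020}); this rearranges to $\sum_{v\in V}\rho_v\ge 1$. If a self-contained derivation is preferred, I would first treat the neutral case $f(\alpha)=f(\tauord)$, where $\sum_{v\,\mathrm{mutant}}1/d(v)$ is a martingale and optional stopping gives $\rho_v=\tfrac{1/d(v)}{\sum_u 1/d(u)}$ and hence $\sum_v\rho_v=1$ exactly, and then pass to the general case $f(\alpha)\ge f(\tauord)$ via Lemma~\ref{lem:coupling}: the neutral fitness function raises the resident fitness to $f(\alpha)$, which can only decrease fixation, so the neutral values lower-bound the $\rho_v$ and $\sum_v\rho_v\ge 1$ persists.

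The second ingredient is monotonicity of the fixation probability in the initial $\alpha$-set, which I would obtain as a direct special case of Lemma~\ref{lem:coupling} with $f'=f$. Concretely, if $S\in\Omega$ satisfies $S(v)=\alpha$ for some $v$, then instantiating the coupling with $\tildeM_0=S$ and $\tildeM'_0=S_v$ (so that $V'_{\alpha}(0)=\{v\}\subseteq V_{\alpha}(0)$, while the hypotheses $f(\alpha)=f'(\alpha)$ and $f(\tauord)\le f'(\tauord)$ hold trivially) gives $V'_{\alpha}(t)\subseteq V_{\alpha}(t)$ for all $t\in\nonnegreals$, so $\pi_\alpha(G,\tau,f,S)\ge\rho_v$.

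With both ingredients in place I would unfold the definition of $\calD$. Writing $N=|V[2]\times\tau[2]|$, the mixture structure of $D$ gives
\[
\pi_\alpha(G,\tau,f,D)=\frac1N\sum_{(\vtuple,\ttuple)\in V[2]\times\tau[2]}\E_{S\sim D_{\vtuple,\ttuple}}\bigl[\pi_\alpha(G,\tau,f,S)\bigr].
\]
For each pair $(\vtuple,\ttuple)$ exactly one coordinate of $\ttuple$ equals $\alpha$, so there is a distinguished vertex $v^\ast(\vtuple,\ttuple)\in\vtuple$ that every state in $\Omega(\vtuple,\ttuple)$---and hence every $S$ in the support of $D_{\vtuple,\ttuple}$---assigns type $\alpha$. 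By the monotonicity ingredient each inner expectation is at least $\rho_{v^\ast(\vtuple,\ttuple)}$.

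Finally I would carry out the counting. As $(\vtuple,\ttuple)$ ranges over $V[2]\times\tau[2]$, each vertex $v$ occurs as $v^\ast(\vtuple,\ttuple)$ exactly $2(n-1)$ times (it may sit in whichever coordinate is matched to $\alpha$, with the remaining coordinate free), while $N=2n(n-1)$; hence $\frac1N\sum_{(\vtuple,\ttuple)}\rho_{v^\ast(\vtuple,\ttuple)}=\frac1n\sum_{v\in V}\rho_v\ge 1/n$. Combining with the display yields $\pi_\alpha(G,\tau,f,D)\ge 1/n$. I expect the only genuine content to be the first ingredient (the two-type $1/n$ bound, which is where connectedness and $f(\alpha)\ge f(\tauord)$ are used); the main thing to get right in the rest is the bookkeeping showing that averaging over $\calD$ exactly reproduces the uniform single-mutant distribution, together with the routine check that Lemma~\ref{lem:coupling} applies verbatim when $f'=f$.
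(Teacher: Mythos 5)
Your proposal is correct and is essentially the paper's own argument: your counting that each vertex is the distinguished $\alpha$-vertex for exactly $2(n-1)$ of the $2n(n-1)$ pairs $(\vtuple,\ttuple)$ is precisely the paper's coupling of $D$ with $\dmut$ (where ``$1/n$ of the pairs'' match a given vertex to $\alpha$), and both proofs then combine monotonicity in the initial $\alpha$-set with the classical $1/n$ bound for the uniform single-mutant distribution. The only cosmetic differences are that you derive the monotonicity step from Lemma~\ref{lem:coupling} rather than citing Theorem~6 of D\'iaz et al., and you sketch an optional self-contained martingale derivation of the classical bound.
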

\begin{proof}

Let $\tauord$ be the type in $\tau \setminus \{\alpha\}$. 
Recall the distribution $\Dmut = \Dmut(G,\tau,\tauord)$ which is uniform on states $M_0 \in \Omega$ in which exactly one vertex has type~$\alpha$.

We will  construct a coupling of~$D$ and~$\Dmut$ such that, in the random sample $(M_0,M^{\mut}_0)$ drawn from the coupling, every state that is assigned type~$\alpha$ in~$M^{\mut}_0$ is also assigned type~$\alpha$ in~$M_0$.
 We can then apply
Theorem~6 of \cite{diaz-AbsorptionTimeMoran-2016} to show that 
$\pi_{\alpha}(G,f,\tau,D) \geq \pi_{\alpha}(G,f,\tau,\Dmut)$, completing the proof.

To construct the coupling, consider the distribution $D$. By definition, 
$D$ is defined as follows: Choose $(\vtuple, \ttuple)$ uniformly at random from $V[2] \times \tau[2]$ and choose $M_0$ from  the distribution~$D_{\vtuple,\ttuple}$  on $\Omega(\vtuple,\ttuple)$. Let $n=|V|$. 
In the coupling, the outcome in which $M_0^{\mut}$ assigns vertex $u$ to type $\alpha$
is coupled with all choices  $(\vtuple,\ttuple)$ such that one of the following occurs:
(1) $u$ is the first vertex of $\vtuple$ and $\alpha$ is the first type in $\ttuple$ and
(2) $u$ is the second vertex of $\vtuple$ and $\alpha$ is the second type in $\ttuple$.
Note that $1/n$ of the pairs $(\vtuple, \ttuple)$ have this property. Also, every outcome in $D_{\vtuple, \ttuple}$ assigns $\alpha$ to $u$, as required.
\end{proof}

\begin{corollary}\label{cor:lower_bound_probability}\label{cor:low}
Let $G=(V,E)$ be a connected graph, let $\tau$ be a set of types, and let $f\colon \tau \to \mathbb{Q}_{\geq 1}$ be a fitness function. Let $\alpha$
be a type in $\taumax{f}$.  Let $D$ be a distribution in $\calD(G,\tau)$.  Then 
$\pi_{\alpha}(G,f,\tau,D) \geq 1/n$. \end{corollary}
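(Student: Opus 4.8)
The plan is to reduce the multi-type statement to the two-type case already settled in Lemma~\ref{lem:oldstuff}, using the pointwise domination supplied by Corollary~\ref{cor:tech} together with an averaging argument over the start distribution. Concretely, I would set $\beta = \arg\max\{f(j) \mid j \in \tau \setminus \{\alpha\}\}$, let $\tau' = \{\alpha,\beta\}$, and let $g\colon \Omega \to \Omega'$ be the projection from Corollary~\ref{cor:tech} (which keeps type $\alpha$ and sends every other type to $\beta$). I assume $|\tau| = k \geq 2$, the case $k=1$ being trivial since then $\alpha$ fixates with probability $1$.

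First I would average the pointwise bound. Corollary~\ref{cor:tech} gives $\pi_\alpha(G,\tau,f,M_0) \geq \pi_\alpha(G,\tau',f,g(M_0))$ for every state $M_0$. Writing $g_*D$ for the pushforward of $D$ under $g$ (the law of $g(M_0)$ when $M_0\sim D$) and using the defining identity $\pi_\alpha(G,\tau,f,D) = \sum_{M_0}\Pr_D(M_0)\,\pi_\alpha(G,\tau,f,M_0)$, summing the pointwise bound against $D$ yields $\pi_\alpha(G,\tau,f,D) \geq \pi_\alpha(G,\tau',f,g_*D)$. It then remains to bound the right-hand side by $1/n$, and the natural route is to apply Lemma~\ref{lem:oldstuff}, which requires verifying that $g_*D \in \calD(G,\tau')$.

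I expect this verification to be the \emph{main obstacle}. The idea is to transport the block structure of $D$ through $g$. Since $k = |\tau|$, each tuple $\ttuple$ in the definition of $D$ is a permutation of all of $\tau$, so both $\alpha$ and $\beta$ occur in it, at positions I call $j_\alpha$ and $j_\beta$; hence the pushed-forward block $g_*D_{\vtuple,\ttuple}$ is supported on the two-type slab $\Omega'((v_{j_\alpha},v_{j_\beta}),(\alpha,\beta))$, which coincides as a set with $\Omega'((v_{j_\beta},v_{j_\alpha}),(\beta,\alpha))$. I would then re-index the uniform average defining $D$: as $(\vtuple,\ttuple)$ ranges over $V[k]\times\tau[k]$, the induced ordered pair $(v_{j_\alpha},v_{j_\beta})$ is uniform over ordered pairs of distinct vertices, so each such pair arises from exactly $m_0 = |V[k]\times\tau[k]|/(n(n-1))$ of the blocks. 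For $k\geq 2$ this count $m_0$ is even, so I can split these blocks evenly between the two coincident slab labels $((v_{j_\alpha},v_{j_\beta}),(\alpha,\beta))$ and $((v_{j_\beta},v_{j_\alpha}),(\beta,\alpha))$, and define $D'_{\vtuple',\ttuple'}$ as the average of the blocks assigned to the label $(\vtuple',\ttuple')$. A direct computation then shows this family exhibits $g_*D$ as the required uniform average over $V[2]\times\tau'[2]$, so $g_*D\in\calD(G,\tau')$. The key leverage is precisely the coincidence $\Omega'((a,b),(\alpha,\beta)) = \Omega'((b,a),(\beta,\alpha))$, which gives exactly the freedom needed to balance the block counts.

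Finally, $f(\alpha)=f^+\geq f(\beta)$ means $\alpha\in\taumax{f}$ within $\tau'$, so Lemma~\ref{lem:oldstuff} applies to $g_*D$ and gives $\pi_\alpha(G,\tau',f,g_*D)\geq 1/n$. Combined with the inequality from the second paragraph, this proves $\pi_\alpha(G,\tau,f,D)\geq 1/n$, as required.
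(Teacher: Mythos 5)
Your proposal is correct and follows essentially the same route as the paper: reduce to the two-type case via Corollary~\ref{cor:tech}, push $D$ forward through $g$, and verify $g_*D\in\calD(G,\tau')$ by regrouping the blocks $D_{\vtuple,\ttuple}$ into two-type blocks before invoking Lemma~\ref{lem:oldstuff}. The only difference is bookkeeping: the paper assigns each block $(\vtuple,\ttuple)$ to the unique label $(\vtuple',\ttuple')$ given by the ordered sub-tuple at the positions of $\alpha$ and $\beta$, whereas you split the $m_0$ blocks per ordered vertex pair evenly across the two coinciding slab labels --- both yield the same uniform count of $m_0/2$ blocks per label.
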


\begin{proof}
Let $k=|\tau|$.
If $k=2$, the result follows directly from Lemma~\ref{lem:oldstuff}. Otherwise,
Let $\beta = \arg\max\{f(j) \mid j\in \tau\setminus\{\alpha\}\}$.
Let $\tau'=\{\alpha,\beta\}$.
Let $\Omega$ be the set of functions from $V$ to $\tau$ and let $\Omega'$ be the set of functions from $V$ to $\tau'$. Let $g\colon \Omega \to \Omega'$ be defined as follows.
For every state $S\in \Omega$, $g(S)$ is the state  
such that $g(S)(v) = \alpha$ if $S(v)=\alpha$ and $g(S)(v)=\beta$, otherwise.
Corollary~\ref{cor:tech} ensures that, 
for every 
state $M_0 \in \Omega$, $\pi_\alpha(G,\tau,f,M_0)\geq \pi_\alpha(G,\tau',f,g(M_0))$.
  
Let $g(D)$ be the distribution on~$\Omega'$ induced by~$D$ and~$g$. Specifically, to sample from $g(D)$, take a sample $S$ from~$D$ and then output~$g(S)$. 
Then $\pi_\alpha(G,\tau,f,D)\geq \pi_\alpha(G,\tau',f,g(D))$.
To finish, we
will show that $g(D) \in \calD(G,\tau')$.
This implies the corollary since then we can apply Lemma~\ref{lem:oldstuff} to conculde  
$\pi_\alpha(G,\tau',f,g(D)) \geq 1/n$.

To prove that $g(D) \in \calD(G,\tau')$,   recall what 
it means that $D\in \calD(G,\tau)$.
This means that
for every $\vtuple \in V[k]$ and $\ttuple \in \tau[k]$,
there is a distribution $D_{\vtuple,\ttuple}$  on $\Omega(\vtuple,\ttuple)$  such that
$D$ can be defined as follows: Choose $(\vtuple, \ttuple)$ uniformly at random from $V[k] \times \tau[k]$ and choose a sample $S$ from $D_{\vtuple,\ttuple}$.

For every pair $(\vtuple',\ttuple')$ in $V[2] \times \tau'[2]$
let $W_{\vtuple',\ttuple'}$ be  the set of pairs  $(\vtuple, \ttuple) \in  V[k] \times \tau[k]$ 
such that $\vtuple'$ is a sub-tuple of $\vtuple$
and $\ttuple'$ is the same sub-tuple (using the same indices) of $\ttuple$.
Let $D'_{\vtuple',\ttuple'}$ be the following distribution: 
Choose $(\vtuple, \ttuple)$ uniformly at random from~$W_{\vtuple',\ttuple'}$, Choose $S$ from $D_{\vtuple,\ttuple}$ and output $g(S)$. Let $D'$ be the distribution 
 defined as follows:  Choose $(\vtuple',\ttuple')$ u.a.r.{} from $V[2] \times \tau'[2]$
 and sample an output from $D'_{\vtuple',\ttuple'}$. 
It is clear from construction that $D'$   is in 
$\calD(G,\tau')$. So to prove  $g(D)\in \calD(G,\tau')$
we need only show $g(D)=D'$. 
This follows from that fact that, in the execution of $D'$, each pair $(\vtuple, \ttuple)\in V[k]\times \tau[k]$ is equally likely to be chosen. 
\end{proof}

\section{Bounding the Absorption Time}\label{sec:absorption}

The goal of this section is to bound the absorption time of the multi-type Moran process.  
It will be convenient to work with connected graphs in this section. We will also start from initial states 
$M_0\in \Omegainit(G,\tau)$ since all distributions in $\calD(G,\tau)$ are distributions on the state space~$\Omegainit(G,\tau)$.

\begin{definition}[Absorption time] \label{def:abtime}
Let $G=(V,E)$ be a connected graph, let $\tau$ be a set of types, let $f:\tau \to \mathbb{Q}_{\geq 1}$ be a fitness function and let $M_0\in \Omegainit(G,\tau)$ be an initial state.
For the Moran process $M=M(G, \tau, f,M_0)$ 
and for any type $j\in \tau$
we define the stopping time 
$$\absorb_j(G,\tau,f,M_0) := \min \{ 
\mbox{$t \in \nonnegints \mid V_j(t) = V $ or  $V_j(t) = \emptyset$} \}.$$  
We refer to~$\absorb_j$ as the \emph{absorption time} of type~$j$ in~$M$. The \emph{total absorption time} of~$M$ is defined as
$$\absorb(G,\tau, f,M_0) := \min \{ t \in \nonnegints \mid 
\mbox{ for some    
    $j\in\tau$, $V_j(t) = V$} 
    \}.$$
    If $G,\tau,f$, and $M_0$ are clear from the context, we will just write $\absorb_j$ and $\absorb$.
\end{definition}

We will upper bound the absorption time for each type, which will then easily translate to an upper bound on the total absorption time. To this end, we start with the most fit type.

\subsection{Expected absorption time of the most fit mutant} \label{sec:absorption:strongest}
To upper bound the expected absorption time, we adopt the potential function for the $2$-type process from the work of D{\'i}az et al.~\cite[Section 3]{diaz-ApproximatingFixationProbabilities-2014}
adapted for a maximally-fit type  in the multi-type process.
\begin{definition}\label{def:psi}
Let $G=(V,E)$ be a graph and let $\tau$ be a set of types.
Let $\Omega$ be the set of functions from~$V$ to~$\tau$.
For any type $j\in \tau$,
the potential function $\Psi_j$ is the function that maps every state $S\in \Omega$
to the value $\Psi_j(S) = \sum_{v: S(V) = j} 1/d(v)$.
\end{definition}

Lemma~\ref{lem:XX} applies in situations where the most fit type is unique. 
The proof is almost identical to the proof of~\cite[Lemma 5]{diaz-ApproximatingFixationProbabilities-2014}, but we include the details for completeness.

 \begin{lemma} \label{lem:potential-increasing-in-expectation_new}\label{lem:XX}
Let $G=(V,E)$ be a connected graph, let $\tau$ be a set of types with $|\tau|=k>1$, and let $f\colon \tau \to \mathbb{Q}_{\geq 1}$ be a fitness function. Let $\alpha$
be a type in $\taumax{f}$. 
Let $f^* =\max\{f(j) \mid j\in\tau\setminus\alpha\}$, and suppose that $f^*<f(\alpha)=f^+$. 
Suppose that   $V_\alpha(t)\notin \{\emptyset,V\}$ for a non-negative integer~$t$. Then    \begin{align*}  
        \E({\Psi_{\alpha}(M_{t+1}) - \Psi_{\alpha}(M_t)}) > (1-  {f^\ast}/{f^+} )/n^{3}.
    \end{align*}
\end{lemma}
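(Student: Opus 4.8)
The plan is to compute the one-step expected change of $\Psi_\alpha$ directly, localising it to the edges that lie on the boundary between $\alpha$-vertices and non-$\alpha$-vertices. In a single transition from state $M_t$, a vertex $v$ is chosen with probability $f(v,t)/F(t)$, a neighbour $w$ is chosen with probability $1/d(v)$, and then $w$ adopts the type of $v$. Since only the type of $w$ changes, the only summand of $\Psi_\alpha$ that can be affected is the one indexed by $w$: it gains $1/d(w)$ if $v$ has type $\alpha$ and $w$ does not, it loses $1/d(w)$ if $v$ does not have type $\alpha$ and $w$ does, and it is unchanged in every other case. Hence $\Psi_\alpha$ changes only when $\{v,w\}$ is a boundary edge, i.e.\ exactly one endpoint lies in $V_\alpha(t)$.

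Next I would sum these contributions over boundary edges. Fix a boundary edge $\{x,y\}$ with $x\in V_\alpha(t)$ and $y\notin V_\alpha(t)$. The pair $(v,w)=(x,y)$ raises the potential by $1/d(y)$ and is selected with probability $\frac{f(\alpha)}{F(t)}\cdot\frac{1}{d(x)}$, while the pair $(v,w)=(y,x)$ lowers it by $1/d(x)$ and is selected with probability $\frac{f(y,t)}{F(t)}\cdot\frac{1}{d(y)}$. Adding these, the net expected contribution of the edge $\{x,y\}$ is $\frac{f(\alpha)-f(y,t)}{F(t)\,d(x)\,d(y)}$, so altogether
\[
\E\bigl(\Psi_{\alpha}(M_{t+1}) - \Psi_{\alpha}(M_t)\bigr)
= \sum_{\substack{\{x,y\}\text{ boundary}\\ x\in V_\alpha(t),\, y\notin V_\alpha(t)}} \frac{f(\alpha)-f(y,t)}{F(t)\,d(x)\,d(y)}.
\]

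Finally I would bound this sum from below. For every boundary edge the numerator satisfies $f(\alpha)-f(y,t)\geq f^+-f^\ast>0$ because $y\notin V_\alpha(t)$ and $\alpha$ is the unique maximally-fit type; moreover $d(x)d(y)\leq (n-1)^2<n^2$ and $F(t)\leq n f^+$. Since $G$ is connected and $V_\alpha(t)\notin\{\emptyset,V\}$, there is at least one boundary edge, so retaining a single (nonnegative) term of the sum gives
\[
\E\bigl(\Psi_{\alpha}(M_{t+1}) - \Psi_{\alpha}(M_t)\bigr)
\;\geq\; \frac{f^+-f^\ast}{n f^+\,(n-1)^2}
\;>\; \frac{f^+-f^\ast}{n^3 f^+}
\;=\; \bigl(1-f^\ast/f^+\bigr)/n^3,
\]
the strict inequality coming from $(n-1)^2<n^2$. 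The argument is essentially a direct calculation; the only point requiring care is the bookkeeping in the first two steps, namely verifying that non-boundary edges contribute exactly zero and that each boundary edge receives contributions from both orientations with the correct selection probabilities, and then invoking connectedness to guarantee the existence of a boundary edge whenever the state is non-absorbing.
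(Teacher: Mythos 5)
Your proposal is correct and follows essentially the same route as the paper: both compute the expected one-step change of $\Psi_\alpha$ as a sum over boundary edges, with each edge $\{x,y\}$ contributing $\bigl(f(\alpha)-f(y,t)\bigr)/\bigl(F(t)\,d(x)\,d(y)\bigr)$, and then bound this below using $f(\alpha)-f(y,t)\geq f^+-f^\ast$, $F(t)\leq n f^+$ and $d(x)d(y)<n^2$ together with the existence of at least one boundary edge. Your write-up is somewhat more explicit about why non-boundary edges contribute zero, but the underlying calculation is identical to the paper's.
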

\begin{proof}
First note that $\E(\Psi_{\alpha}(M_{t+1}) - \Psi_{\alpha}(M_t))$ is equal to
\begin{align*}
	~&\sum_{\substack{\{u,v\} \in E\\ u \in V_\alpha(t)\\ v \in V_j(t), j \neq \alpha}}  \frac{f^+}{F(t)} \frac{1}{d(u)}  \left(\Psi_{\alpha} ( M_t\vert_{u\rightarrow v})-\Psi_{\alpha}(M_t)\right) 
		+ \frac{f(j)}{F(t)} \frac{1}{d(v)}  \left(\Psi_{\alpha}\left( M_t\vert_{v \rightarrow u}\right)-\Psi_{\alpha}(M_t)\right)\\
  =&\sum_{\substack{\{u,v\} \in E\\ u \in V_\alpha(t)\\ v \in V_j(t), j \neq \alpha}}  \frac{f^+}{F(t)} \frac{1}{d(u)}  \frac{1}{d(v)} 
		- \frac{f(j)}{F(t)} \frac{1}{d(v)}  \frac{1}{d(u)}
\end{align*}  
Next observe that $F(t) < n  f^+$ and $f(j)\leq f^\ast$ for all $j\in\tau\setminus\alpha$. Thus 
\[ \E(\Psi_{\alpha}(M_{t+1}) - \Psi_{\alpha}(M_t)) > \frac{f^+ - f^\ast}{n f^+} \sum_{\substack{\{u,v\} \in E\\ u \in V_\alpha(t)\\ v \in V_j(t), j \neq \alpha}} \frac{1}{d(u)d(v)} > \left(1- \frac{f^\ast}{f^+} \right)/n^{3},  \]
concluding the proof.
\end{proof}

The technique that we use to upper bound $\E(\absorb_\alpha(G,\tau,f,M_0))$ relies on 
martingale techniques which are made explicit by Hajek~\cite{hajek-HittingTimeOccupationTimeBounds-1982}. 
Diaz et.~al \cite[Theorem 6]{diaz-ApproximatingFixationProbabilities-2014} used this technique to study the $2$-type process, following the approach of He and Yao~\cite{he-DriftAnalysisAverage-2001}.
For convenience, we use here a multi-type generalisation of the statement from Goldberg et al.~\cite{anngoldberg-PhaseTransitionsMoran-2020}.
\begin{lemma}[\cite{anngoldberg-PhaseTransitionsMoran-2020} Lemma 45]\label{lem:drift_lemma}
    Let $Y$ be a Markov Chain with a finite state space $\Omega_Y$. Let $c_1,c_2 >0$, let $\Psi: \Omega_Y \to \mathbb{R}_{\geq 0}$ be a function, and let $\mathsf{A} \geq 0$ be a stopping time with 
$\mathsf{A}\leq \min\{t\mid \mbox{$\Psi(Y_t)=0$ or $\Psi(Y_t)\geq c_1$} \}$. Suppose that
    \begin{enumerate}
        \item[(i)] from every state $S_1\in \Omega_Y$ with $0<\Psi(S_1)<c_1$, there is a path in $Y$ from $S_1$ to some state~$S_2$ with $\Psi(S_2)=0$ or $\Psi(S_2)\geq c_1$;
        \item[(ii)] for all $t\geq 0$, if $\Psi(Y_t)<c_1$ then $\Psi(Y_{t+1})\leq c_1 + 1$; and
        \item[(iii)] for all $t\geq 0$ and all $S\in \Omega_Y$ such that the events $\mathsf{A}>t$ and $Y_t=S$ are consistent,   $\E(\Psi(Y_{t+1}) - \Psi(Y_t)\mid Y_t=S)\geq c_2$.
    \end{enumerate}
    Then we have $\E(\mathsf{A})\leq (c_1 - \Psi(Y_0)+1)/c_2$.
\end{lemma}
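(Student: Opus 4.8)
The plan is to run a standard optional-stopping (drift) argument: condition (iii) says the potential $\Psi$ drifts upward by at least $c_2$ per step until absorption, so after subtracting off this drift and stopping at $\mathsf{A}$ we obtain a submartingale whose expectation is pinned down from below, and the uniform ceiling on $\Psi$ at the stopping time then yields the bound on $\E(\mathsf{A})$.

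First I would record the two structural consequences of the hypotheses that the argument needs. Since $\mathsf{A} \le \min\{t : \Psi(Y_t)=0 \text{ or } \Psi(Y_t) \ge c_1\}$, for every $t < \mathsf{A}$ we have $0 < \Psi(Y_t) < c_1$; in particular $\Psi(Y_{\mathsf{A}-1}) < c_1$, so condition (ii) gives the ceiling $\Psi(Y_{\mathsf{A}}) \le c_1+1$. Combining these, for every $t$ the stopped value satisfies $\Psi(Y_{t \wedge \mathsf{A}}) \le c_1 + 1$: it equals $\Psi(Y_t) < c_1$ on $\{\mathsf{A} > t\}$ and $\Psi(Y_{\mathsf{A}}) \le c_1+1$ on $\{\mathsf{A} \le t\}$.

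Next I would define the drift-corrected stopped process $W_t := \Psi(Y_{t\wedge \mathsf{A}}) - c_2\,(t \wedge \mathsf{A})$ and show it is a submartingale with respect to the natural filtration $\mathcal{F}_t$ of $Y$. On the event $\{\mathsf{A} \le t\}$ the process is frozen, so $W_{t+1}=W_t$. On the event $\{\mathsf{A} > t\}$ — which is $\mathcal{F}_t$-measurable because $\mathsf{A}$ is a stopping time — we have $W_{t+1} - W_t = \Psi(Y_{t+1}) - \Psi(Y_t) - c_2$; taking $\E(\cdot \mid \mathcal{F}_t)$ and invoking the Markov property lets me replace conditioning on $\mathcal{F}_t$ by conditioning on $Y_t = S$, so condition (iii) (which applies precisely to states $S$ consistent with $\mathsf{A} > t$) gives $\E(\Psi(Y_{t+1}) - \Psi(Y_t) \mid \mathcal{F}_t) \ge c_2$ and hence $\E(W_{t+1} - W_t \mid \mathcal{F}_t) \ge 0$. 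This is the step I expect to require the most care: one must keep straight that the drift hypothesis is stated as a one-step expectation conditioned on $Y_t=S$ (not further conditioned on the past), and use the Markov property to identify this with the conditional drift on $\{\mathsf{A}>t,\, Y_t = S\}$.

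Finally I would extract the bound. The submartingale property gives $\E(W_t) \ge \E(W_0) = \Psi(Y_0)$, that is $c_2\,\E(t\wedge\mathsf{A}) \le \E(\Psi(Y_{t\wedge\mathsf{A}})) - \Psi(Y_0) \le (c_1+1) - \Psi(Y_0)$, where the last inequality is the uniform ceiling from the first step. Since $t \wedge \mathsf{A} \uparrow \mathsf{A}$, monotone convergence yields $c_2\,\E(\mathsf{A}) \le c_1 - \Psi(Y_0) + 1$, which is the claim; note that the finite right-hand side simultaneously shows $\E(\mathsf{A}) < \infty$. Here condition (i) plays only a background role — it guarantees that the absorbing set is reachable, so the setup is non-degenerate — while the quantitative bound falls out of the submartingale inequality together with the uniform ceiling alone.
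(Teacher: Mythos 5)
Your proof is correct; since the paper imports this lemma verbatim from \cite{anngoldberg-PhaseTransitionsMoran-2020} (Lemma 45) without reproducing its proof, there is nothing in-paper to compare against, but your stopped-submartingale argument --- the ceiling $\Psi(Y_{t\wedge\mathsf{A}})\le c_1+1$ obtained from (ii), the Markov-property justification of the drift step from (iii), and monotone convergence --- is precisely the standard Hajek/He--Yao drift argument to which the surrounding text attributes this bound. Your observation that condition (i) is not needed for the quantitative conclusion along this route (the uniform bound on $\E(t\wedge\mathsf{A})$ already forces $\E(\mathsf{A})<\infty$) is also accurate.
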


\begin{corollary}\label{cor:tau-k-exp-bound}
Let $G=(V,E)$ be a connected graph, let $\tau$ be a set of types with $|\tau|=k>1$, and let $f\colon \tau \to \mathbb{Q}_{\geq 1}$ be a fitness function. Let $\alpha$
be a type in $\taumax{f}$. 
Let $f^* =\max\{f(j) \mid j\in\tau\setminus\alpha\}$, and suppose that $f^*<f(\alpha)=f^+$. 
Consider a state $M_0\in \Omegainit(G,\tau)$. Then \begin{align*}
	 	\E(\absorb_\alpha(G,\tau,f,M_0)) \leq \frac{f^+}{f^+-f^\ast} \left(n+1 \right) n^3 .
	 \end{align*}
\end{corollary}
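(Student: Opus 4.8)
The plan is to apply the drift lemma (Lemma~\ref{lem:drift_lemma}) to the Moran process $M = M(G,\tau,f,M_0)$ with the potential function $\Psi = \Psi_\alpha$ and the stopping time $\mathsf{A} = \absorb_\alpha(G,\tau,f,M_0)$, using the per-step drift supplied by Lemma~\ref{lem:XX}. Concretely I would set $c_1 := \sum_{v\in V} 1/d(v)$ and $c_2 := (1 - f^\ast/f^+)/n^3$. The key observations that make these choices work are that $\Psi_\alpha(S) = 0$ exactly when $V_\alpha = \emptyset$, and that $\Psi_\alpha(S) = c_1$ (equivalently $\Psi_\alpha(S) \geq c_1$, since $c_1$ is the maximum possible value of $\Psi_\alpha$) exactly when $V_\alpha = V$; both use that $G$ is connected, so every degree is positive. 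Hence $\absorb_\alpha = \min\{t : \Psi_\alpha(M_t) = 0 \text{ or } \Psi_\alpha(M_t) \geq c_1\}$, matching the stopping-time hypothesis of Lemma~\ref{lem:drift_lemma}.

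Next I would verify conditions (i)--(iii). For (i), from any state with $0 < \Psi_\alpha < c_1$ we have $V_\alpha(t) \notin \{\emptyset, V\}$, and connectivity of $G$ guarantees a sequence of reproduction steps (each of positive probability) that spreads type $\alpha$ across the whole graph, reaching a state with $\Psi_\alpha = c_1$. For (ii), a single transition changes the type of exactly one vertex $w$, so $\Psi_\alpha$ can increase by at most $1/d(w) \leq 1$; in fact $\Psi_\alpha \leq c_1$ always, so $\Psi_\alpha(M_{t+1}) \leq c_1 + 1$ holds outright. For (iii), whenever the event $\mathsf{A} > t$ is consistent with $M_t = S$ the state is unabsorbed, i.e.\ $V_\alpha(t) \notin \{\emptyset, V\}$, so Lemma~\ref{lem:XX} applies and gives $\E(\Psi_\alpha(M_{t+1}) - \Psi_\alpha(M_t) \mid M_t = S) > (1 - f^\ast/f^+)/n^3 = c_2$, which certainly yields the required $\geq c_2$.

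Finally I would read off the conclusion $\E(\absorb_\alpha) \leq (c_1 - \Psi_\alpha(M_0) + 1)/c_2$ and simplify. Since $\Psi_\alpha(M_0) \geq 0$ and $c_1 = \sum_{v\in V} 1/d(v) \leq n$ (each $d(v) \geq 1$), the numerator is at most $n + 1$, and $1/c_2 = f^+ n^3 / (f^+ - f^\ast)$, giving exactly $\frac{f^+}{f^+ - f^\ast}(n+1)n^3$. I do not expect a genuine obstacle here: the argument is essentially an instantiation of the two preceding lemmas. The only points requiring care are getting the stopping-time correspondence right (so that $\absorb_\alpha$ really is the first hitting time of $\{\Psi_\alpha = 0\} \cup \{\Psi_\alpha \geq c_1\}$) and checking that the conditioning event $\mathsf{A} > t$ forces an unabsorbed state, so that the hypothesis $V_\alpha(t) \notin \{\emptyset, V\}$ of Lemma~\ref{lem:XX} is met.
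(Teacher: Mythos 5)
Your proposal is correct and follows essentially the same route as the paper's own proof: the same choices $c_1=\sum_{v\in V}1/d(v)$ and $c_2=(1-f^\ast/f^+)/n^3$, the same verification of conditions (i)--(iii) of Lemma~\ref{lem:drift_lemma} via Lemma~\ref{lem:XX}, and the same final simplification using $c_1\leq n$. No gaps.
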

\begin{proof}
Set $c_1:= \sum_{v \in V} 1/d(v)$ and  
$c_2:=( 1-{f^\ast}/{f^+})/ n^{3}$.
Let $\absorb_\alpha =\absorb_\alpha(G,\tau,f,M_0)$ and let $\Psi_\alpha$ be the potential function from  Definition~\ref{def:psi}.  
Let $M= M(G,\tau,f,M_0)$.
By construction
 $
	\absorb_\alpha = \min \{t\mid \mbox{$\Psi(M_t)=0$ or   $\Psi(M_t)=c_1$}\}. 
 $
If $V_\alpha(0) \in \{V,\emptyset\}$  then $\absorb_\alpha = 0$ and we are done. 
Otherwise, note that the state space~$\Omega$ of $M$ is finite and that $\Psi_\alpha$ is non-negative for every state. We now show that all three requirements $(i)-(iii)$ of Lemma~\ref{lem:drift_lemma} are satisfied. 

For $(i)$, suppose $0<\Psi(M_t)<c_1$ for some non-negative integer~$t$.  
From the definition of~$A_\alpha$, $V_\alpha \not\in \{\emptyset,V\}$. Since $G$ is connected, 
there is a positive probability that $M$ moves from $M_t$ to a state~$S$ in which $V_\alpha = V$ so $\Psi(S)=c_1$. For $(ii)$,  note that the maximum potential is $c_1$.
Finally, for $(iii)$,   Lemma~\ref{lem:potential-increasing-in-expectation_new} ensures that $\E(\Psi_\alpha(M_{t+1})-\Psi_\alpha(M_t)) > c_2>0$.  

Lemma~\ref{lem:drift_lemma} thus
implies
$\E(A_\alpha)\leq (c_1 - \Psi_\alpha(M_0)+1)/c_2 \leq (c_1 +1)/c_2 $.
Since $c_1\leq n$, the corollary follows. 
\end{proof}

\subsection{Expected total absorption time} \label{sec:absorption:weaker}

Now that we have an upper bound on the expectation of $A_\alpha(G,\tau,f,M_0)$
for the most-fit type~$\alpha$
we use it to derive  an upper bound on the expectation of~$A_j(G,\tau,f,M_0)$ for any 
$j\in \tau$. Our strategy is as follows.
If $j=\alpha$, we are finished. Otherwise after time $A_\alpha(G,\tau,f,M_0)$
there are two possibilities. If type~$\alpha$ has fixated, then $A_j(G,\tau,f,M_0)$ occurs.
Otherwise, type~$\alpha$ has become extinct, so we repeat with the new set of types $\tau \setminus \alpha$.
 This inductive approach applies to the setting where the fitnesses of the types
 are pairwise distinct, which we will assume in the first part of this section. The general result (without this assumption) will be derived afterwards. 

When we assume that the fitnesses are distinct we will typically assume that $\tau = \{1,\ldots,k\}$ and that $f(1)<f(2)<\dots <f(k)$. This simplifies the notation with no loss of generality. Note that $\taumax{f} = \{k\}$ in this case.

\begin{definition}[$\absorb_{\geq j}$] \label{def:tau-j-ind}
Let $G$ be a connected graph. Let $k$ be a positive integer.
Let $\tau=\{1,\ldots,k\}$. Let
$f\colon \tau \to \mathbb{Q}_{\geq 1}$ be a fitness function such that
$f(1)<f(2)<\dots <f(k)$. 
Let $M_0$ be a state in~$\Omegainit(G,\tau)$ and let $M = M(G,\tau,f,M_0)$.
For any type $j \in \{2,\dots,k\}$, define
\begin{align*}
	\absorb_{\geq j} :=  \min \{ t \in \nonnegints\mid
 \mbox{
 $\cup_{i \geq j} V_i(t) = \emptyset$  or, for some $i \geq j$,  $V_i(t) = V$ 
 }
 \}.
\end{align*}
\end{definition}
$\absorb_{\geq j}$ is thus the first point in time when either all types $j , \ldots, k$ are extinct or one of them has fixated.

\begin{observation}\label{lem:tau-j-ind-monotone}
Let $G$ be a connected graph. Let $k$ be a positive integer.
Let $\tau=\{1,\ldots,k\}$. Let
$f\colon \tau \to \mathbb{Q}_{\geq 1}$ be a fitness function such that
$f(1)<f(2)<\dots <f(k)$. 
Let $M_0$ be a state in~$\Omegainit(G,\tau)$ and let $M= M(G,\tau,f,M_0)$.
Then
$\absorb_{\geq j}$ is monotonically decreasing in $j$. That is, for all $j \in \{2,\dots,k\}$,
$\absorb_{\geq j} \geq \absorb_{\geq j+1}$.
\end{observation}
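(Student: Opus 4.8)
The plan is to show directly that at the single time $t := \absorb_{\geq j}$ the stopping condition defining $\absorb_{\geq j+1}$ is already satisfied; since $\absorb_{\geq j+1}$ is by definition the \emph{first} time at which its condition holds, this immediately yields $\absorb_{\geq j+1} \leq \absorb_{\geq j}$, which is exactly the claimed monotonicity. So the whole argument reduces to a case analysis of what the process looks like at time $\absorb_{\geq j}$.

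First I would fix $j \in \{2,\dots,k\}$ and set $t := \absorb_{\geq j}$. By Definition~\ref{def:tau-j-ind}, one of two events holds at time $t$: either $\cup_{i \geq j} V_i(t) = \emptyset$, or $V_i(t) = V$ for some $i \geq j$. I would then argue, case by case, that each of these forces the stopping condition for $\absorb_{\geq j+1}$ (namely $\cup_{i \geq j+1} V_i(t) = \emptyset$ or $V_i(t)=V$ for some $i \geq j+1$) to hold at the very same time $t$.

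In the first case, since $\{i : i \geq j+1\} \subseteq \{i : i \geq j\}$, the union $\cup_{i \geq j+1} V_i(t)$ is a subset of the empty set $\cup_{i \geq j} V_i(t)$ and is therefore itself empty, so the extinction clause of the $\absorb_{\geq j+1}$ condition holds. In the second case, write $V_i(t)=V$ for some $i \geq j$. If $i \geq j+1$, then the fixation clause of the $\absorb_{\geq j+1}$ condition holds immediately. The only remaining possibility is $i = j$, in which case type $j$ has fixated, so every vertex has type $j$ and no vertex carries a type in $\{j+1,\dots,k\}$; hence $\cup_{i \geq j+1} V_i(t)=\emptyset$ and the extinction clause holds.

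There is no genuine obstacle here: the statement is a purely combinatorial monotonicity fact about the two stopping conditions, and the only point requiring a moment's care is the boundary case in which type $j$ \emph{itself} fixates, which satisfies the extinction clause of the $\absorb_{\geq j+1}$ condition rather than its fixation clause. Having verified in every case that the defining event of $\absorb_{\geq j+1}$ occurs by time $t=\absorb_{\geq j}$, the minimality of $\absorb_{\geq j+1}$ gives $\absorb_{\geq j+1} \leq \absorb_{\geq j}$, completing the proof.
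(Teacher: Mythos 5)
Your proof is correct and follows essentially the same route as the paper's: both arguments check that the stopping condition for $\absorb_{\geq j+1}$ already holds at time $\absorb_{\geq j}$, with the only nontrivial case being $V_j(t)=V$, which forces $\cup_{i\geq j+1} V_i(t)=\emptyset$. Your write-up is just a slightly more explicit version of the paper's one-paragraph argument.
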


\begin{proof}
At time $t=\absorb_{\geq j}$, either 
$\cup_{i \geq j} V_i(t) = \emptyset$  or, for some $i \geq j$,  $V_i(t) = V$.
The only case that doesn't directly imply
  $\absorb_{\geq j+1} \leq t$ (according to the definition)
  is $V_j(t)=V$ but this case implies
$\cup_{i\geq j+1} V_i(t) = \emptyset$, hence 
    $\absorb_{\geq j+1} \leq t$.  
\end{proof}

\begin{lemma} \label{lem:exp-tau-j-bound}
Let $G$ be a connected graph with $n$ vertices. Let $k$ be a positive integer.
Let $\tau=\{1,\ldots,k\}$. Let
$f\colon \tau \to \mathbb{Q}_{\geq 1}$ be a fitness function such that
$f(1)<f(2)<\dots <f(k)$. 
Let $M_0$ be a state in~$\Omegainit(G,\tau)$ and let $M= M(G,\tau,f,M_0)$.
Let $j$ be a type in $\{2,\ldots,k\}$. Then 
	\begin{align*}
		 \E(\absorb_{\geq j}) \leq \sum_{i=j}^{k} \frac{f(i)}{f(i)-f(i-1)}(n+1)n^3.
	\end{align*}
\end{lemma}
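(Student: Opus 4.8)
The plan is to prove the bound by downward induction on $j$, from $j=k$ down to $j=2$, peeling off one summand at each step via Corollary~\ref{cor:tau-k-exp-bound}. For the base case $j=k$, note that no types exceed $k$, so $\absorb_{\geq k} = \min\{t \mid V_k(t)=\emptyset \text{ or } V_k(t)=V\} = \absorb_k(G,\tau,f,M_0)$. Here $k$ is the unique most-fit type with $\max\{f(i)\mid i\neq k\} = f(k-1) < f(k) = f^+$, so Corollary~\ref{cor:tau-k-exp-bound} yields $\E(\absorb_{\geq k}) \leq \frac{f(k)}{f(k)-f(k-1)}(n+1)n^3$, which is exactly the claimed sum with a single term.

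For the inductive step, assume the bound for $j+1$. I would write
$$\absorb_{\geq j} = \absorb_{\geq j+1} + (\absorb_{\geq j} - \absorb_{\geq j+1}),$$
where the difference is non-negative by Observation~\ref{lem:tau-j-ind-monotone}. Taking expectations and applying the inductive hypothesis to $\E(\absorb_{\geq j+1})$, it suffices to bound $\E(\absorb_{\geq j} - \absorb_{\geq j+1})$ by $\frac{f(j)}{f(j)-f(j-1)}(n+1)n^3$. To this end I would condition on the state $S = M_{\absorb_{\geq j+1}}$ and invoke the strong Markov property at the stopping time $\absorb_{\geq j+1}$, so the evolution afterwards is a fresh Moran process started from $S$. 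There are two cases. If some type $i \geq j+1$ has fixated at time $\absorb_{\geq j+1}$, i.e.\ $V_i(\absorb_{\geq j+1}) = V$, then the defining condition of $\absorb_{\geq j}$ is already met, so $\absorb_{\geq j} = \absorb_{\geq j+1}$ and the difference is $0$. Otherwise every type in $\{j+1,\dots,k\}$ is extinct at time $\absorb_{\geq j+1}$, so $S$ has range contained in $\{1,\dots,j\}$.

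In the second case the crucial point is that extinct types never reappear (the process has no mutation), so for all $t \geq \absorb_{\geq j+1}$ only types in $\{1,\dots,j\}$ are present. Hence the continuation agrees in distribution with the Moran process $M(G,\{1,\dots,j\},f',S)$, where $f'$ is the restriction of $f$ to $\{1,\dots,j\}$, and $\absorb_{\geq j} - \absorb_{\geq j+1}$ equals the absorption time $\absorb_j$ of type $j$ in this reduced process. Since $f(1)<\dots<f(j)$, type $j$ is the unique most-fit type there with $\max\{f'(i)\mid i\neq j\} = f(j-1) < f(j)$, so Corollary~\ref{cor:tau-k-exp-bound} bounds the conditional expectation by $\frac{f(j)}{f(j)-f(j-1)}(n+1)n^3$. (Although that corollary is stated for start states in $\Omegainit(G,\{1,\dots,j\})$, its proof uses only the drift estimate of Lemma~\ref{lem:XX} together with connectivity, both of which hold for an arbitrary start state; so it applies even when $S$ does not have full range $\{1,\dots,j\}$.) Combining the two cases, the conditional expectation of the difference given $S$ is at most $\frac{f(j)}{f(j)-f(j-1)}(n+1)n^3$ for every $S$, and taking the outer expectation closes the induction.

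The step I expect to be the main obstacle is the bookkeeping at the stopping time: justifying that the post-$\absorb_{\geq j+1}$ continuation is genuinely a Moran process on the reduced type set $\{1,\dots,j\}$ (which rests on both the strong Markov property and the irreversibility of extinction), and handling the mild mismatch that the intermediate state $S$ need not lie in $\Omegainit(G,\{1,\dots,j\})$, so that Corollary~\ref{cor:tau-k-exp-bound} must be applied in the slightly more general form that its proof actually establishes.
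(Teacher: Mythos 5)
Your proof is correct and follows essentially the same route as the paper: downward induction on $j$ with base case $\absorb_{\geq k}=\absorb_k$, then bounding $\E(\absorb_{\geq j}-\absorb_{\geq j+1})$ by conditioning on the state at the stopping time $\absorb_{\geq j+1}$ and applying Corollary~\ref{cor:tau-k-exp-bound} to the continuation, in which only types $\{1,\dots,j\}$ survive. Your extra care about the strong Markov property and about the start state not having full range (so that the corollary is applied in the form its proof actually establishes) addresses details the paper passes over silently, but does not change the argument.
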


\begin{proof}
The lemma is proved by induction on $k-j$.
For the base case $j=k$ note that $\absorb_{\geq k}= {\absorb_k}$. Thus the claim holds 
for $j=k$ by Corollary~\ref{cor:tau-k-exp-bound}. 

For the inductive step, assume that the claim holds for $j+1$; we will show that it holds for $j$.

To this end, let $\Omega^j$ be the set of all states 
with no vertices assigned to types $\{j+1,\ldots,k\}$. Using Corollary~\ref{cor:tau-k-exp-bound} on initial states in $\Omega^j$, we obtain
\begin{align*}
			\E(\absorb_{\geq j}-\absorb_{\geq j+1})
			\leq& \sum_{S \in \Omega^j} \E(\absorb_{\geq j}-\absorb_{\geq j+1} \mid M_{\absorb_{\geq j+1}} = S)\cdot \Pr(M_{\absorb_{\geq j+1}} = S) \\
			\leq &\frac{f(j)}{f(j)-f(j-1)} (n+1) n^3 \cdot \underbrace{\sum_{S \in \Omega^j} \Pr(M_{\absorb_{\geq j+1}} = S)}_{\leq 1}.
		\end{align*}
Consequently, in combination with the induction hypothesis for $j+1$, we have
		\begin{align*}
			\E(\absorb_{\geq j})
			&= \E(\absorb_{\geq j}-\absorb_{\geq j+1})+\E(\absorb_{\geq j+1}) \\
			&\leq \frac{f(j)}{f(j) -f(j-1)} (n+1)n^3+\E(\absorb_{\geq j+1})  \\
			&\leq  \frac{f(j)}{f(j) -f(j-1)} (n+1)n^3+ \sum_{i=j+1}^{k}  \frac{f(i)}{f(i) -f(i-1)} (n+1)n^3, \\ 
		\end{align*}
		concluding the proof.
\end{proof}

In the final step, we extend our bounds on the absorption times to processes in which distinct types may have the same fitness.

\begin{theorem}\label{thm:absorption_time}
Let $G=(V,E)$ be a connected graph with $n$ vertices, let $\tau$ be a set of types with $|\tau|>1$, let 
$f\colon \tau \to \mathbb{Q}_{\geq 1}$ be a fitness function, and let $M_0$ be a state in~$\Omegainit(G,\tau)$. 
Suppose that the types in~$\tau$ have $k$~distinct fitnesses and let $f^-=f_1<f_2<\dots<f_k=f^+$ be these fitnesses. 
For each $i\in \{1,\ldots,k\}$, let
$\ell_i = |f^{-1}(f_i)|$ be the number of types with fitness $f_i$  and let $\ell = \max\{ \ell_i\mid i\in \{1,\dots,k\} \}$ be the maximum number of types that share the same fitness.
 Then
\begin{itemize}
    \item[(1)] The total absorption time is bounded by  
    \[\E(\absorb(G,\tau,f,M_0)) \leq (\ell-1)n^6 + \sum_{i=2}^k \frac{f_i}{f_i-f_{i-1}} (n+1)n^3.\]
    \item[(2)] For any $j\in \{2,\dots,k\}$ the absorption time of each type $\beta$ with $f(\beta)=f_j$ is bounded by
    \[\E(\absorb_\beta(G,\tau,f,M_0)) \leq (\ell_j-1)n^6 + \sum_{i=j}^k \frac{f_i}{f_i-f_{i-1}} (n+1)n^3.\]
    \item[(3)] The absorption time of each type $\beta$ with $f(\beta)=f_1$ is bounded by 
    \[\E(\absorb_\beta(G,\tau,f,M_0)) \leq (\ell_1-1)n^6 + \sum_{i=2}^k \frac{f_i}{f_i-f_{i-1}} (n+1)n^3.\]
\end{itemize}
\end{theorem}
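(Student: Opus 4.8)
The plan is to split the general (possibly tied) process into two pieces, each of which we can already control: a \emph{coarsened} process that tracks only fitness classes, to which the distinct-fitness bound of Lemma~\ref{lem:exp-tau-j-bound} applies verbatim, and a \emph{neutral} consensus phase that resolves the surviving types inside a single fitness class. First I would introduce the coarsening: let $\bar M$ be the process obtained from $M = M(G,\tau,f,M_0)$ by relabelling each type $j\in\tau$ by the index $i$ with $f(j)=f_i$, and equip the class $i$ with fitness $f_i$. Since a vertex's reproduction probability $f(v,t)/F(t)$ and the uniform choice of reproduction target depend on a type only through its fitness, $\bar M$ is itself a faithful Moran process on the type set $\{1,\dots,k\}$ with the strictly increasing fitnesses $f_1<\dots<f_k$; moreover $\bar M_0\in\Omegainit(G,\{1,\dots,k\})$ because $M_0\in\Omegainit(G,\tau)$ exhibits every type and hence every class. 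Applying Definition~\ref{def:tau-j-ind} and Lemma~\ref{lem:exp-tau-j-bound} to $\bar M$ then gives $\E(\absorb_{\geq j})\leq \sum_{i=j}^{k}\tfrac{f_i}{f_i-f_{i-1}}(n+1)n^3$. The point is that at time $\absorb_{\geq 2}$ the coarsened process has reached consensus, so in $M$ every vertex now carries a type of a single fitness class, and from that moment the process is neutral.

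The missing ingredient, and the main obstacle, is a bound on this neutral phase: the drift argument behind Lemma~\ref{lem:XX} and Lemma~\ref{lem:drift_lemma} collapses when all present types share a fitness, since the expected increment of $\Psi_\alpha$ is then exactly $0$. I would instead prove, as a separate lemma, that if every vertex has one of $m$ equally fit types then the expected time to consensus is at most $(m-1)n^6$. For $m=2$ the key fact is that in the neutral case the potential $\Psi_\alpha$ of Definition~\ref{def:psi} is a \emph{martingale}: the contribution of each edge whose endpoints have different types to $\E(\Psi_\alpha(M_{t+1})-\Psi_\alpha(M_t))$ cancels. Its range is $[0,c_1]$ with $c_1=\sum_v 1/d(v)\leq n$, and whenever the process is not yet absorbed, a single reproduction across some such edge changes $\Psi_\alpha$ by at least $1/n$ with probability at least $1/n^2$, so the one-step quadratic variation satisfies $\E((\Delta\Psi_\alpha)^2\mid M_t)\geq 1/n^4$. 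Applying optional stopping (after a standard truncation, using that consensus is reached in finite expected time) to the martingale $\Psi_\alpha(M_t)^2-\sum_{s<t}\E((\Delta\Psi_\alpha)^2\mid\mathcal F_s)$ at the consensus time $A$ yields $\tfrac{1}{n^4}\E(A)\leq \E(\Psi_\alpha(M_A)^2)\leq c_1^2\leq n^2$, i.e. $\E(A)\leq n^6$. For general $m$ I would iterate: pick any surviving type $\gamma$ and run the neutral $\gamma$-versus-rest coarsening (valid precisely because all fitnesses are equal, and $\Psi_\gamma$ does not care how many types make up the complement) until $\gamma$ fixates or dies, which by the $m=2$ case costs at most $n^6$; either consensus has been reached, or the number of surviving types has dropped by at least one, so after at most $m-1$ rounds consensus holds, giving $(m-1)n^6$ via the strong Markov property.

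Finally I would assemble the three bounds exactly as in the proof of Lemma~\ref{lem:exp-tau-j-bound}, by conditioning on the state at the relevant coarsened stopping time and invoking the strong Markov property. For part~(1), $\absorb\leq \absorb_{\geq 2}+(\text{neutral consensus time of the class that wins})$; the winning class contains at most $\ell$ types, so the neutral term is at most $(\ell-1)n^6$ and the stated bound follows. For part~(2) with $f(\beta)=f_j$, at time $\absorb_{\geq j}$ either every surviving fitness is below $f_j$ or some class $i\geq j$ has fixated; in every case except $i=j$ the type $\beta$ is already extinct (so $\absorb_\beta\leq\absorb_{\geq j}$), while when $i=j$ it remains only to reach consensus among the $\ell_j$ equally fit types of class $j$, costing at most $(\ell_j-1)n^6$, so $\E(\absorb_\beta)\leq\E(\absorb_{\geq j})+(\ell_j-1)n^6$. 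Part~(3), where $f(\beta)=f_1$, is identical with $\absorb_{\geq 2}$ in place of $\absorb_{\geq j}$: the type $\beta$ can survive only if class~$1$ wins, after which consensus among its $\ell_1$ types costs at most $(\ell_1-1)n^6$. The only delicate step is the interchange of the two phases, but this is immediate from the strong Markov property because the neutral bound holds uniformly over all neutral start states, mirroring the conditioning already used in Lemma~\ref{lem:exp-tau-j-bound}.
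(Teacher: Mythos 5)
Your proof is correct and follows essentially the same route as the paper: coarsen the types into fitness classes so that Lemma~\ref{lem:exp-tau-j-bound} applies, then resolve the surviving equally-fit types one at a time at a cost of $n^6$ per round, and assemble parts (1)--(3) via the strong Markov property. The only difference is that the paper imports the $n^6$ bound for the neutral two-type phase from Theorem~11 of D\'iaz et al.~\cite{diaz-ApproximatingFixationProbabilities-2014}, whereas you re-derive it with the martingale/quadratic-variation argument for $\Psi_\alpha$ --- which is in substance the same proof as in that reference, so nothing new is needed.
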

\begin{proof}
We first treat types with equal fitnesses as the same type. 
Let $\tau'=\{1,\dots,k\}$ and let 
$f'\colon \tau' \to \mathbb{Q}_{\geq 1}$
  be the fitness function such that, for each $i\in \tau'$, $f'(i) = f_i$. 
Let $\Omega$ be the set of functions from~$V$ to~$\tau$ and
let $\Omega'$ be the set of functions from~$V$ to~$\tau'$. 
As usual, let $\Omegainit(G,\tau)$ be the set of functions in~$\Omega$ with range~$\tau$ and let
$\Omegainit'(G,\tau')$ be the set of functions in~$\Omega'$  with range~$\tau'$.

We will define a function~$g$ from~$\Omega$ to~$\Omega'$.
Given $S\in \Omega$, let $g(S)$ be the state in~$\Omega'$ defined as follows.
For each $v\in V$, identify the value $i\in \{1,\ldots,k\}$ such that $f(v) = f_i$.
Then set $g(S)(v) = i$.
Note that if $S\in \Omegainit(G,\tau)$ then $S' \in \Omegainit'(G,\tau')$.
Let $M'_0 = g(M_0)$.

Let $M = M(G,\tau,f,M_0)$ and $M' = M'(G,\tau',f',M'_0)$.
For any $j\in \tau'$, let  
$V'_j(t) = \{v \in V \mid M'_t(v) = j\}$.
Following Definition~\ref{def:abtime} and~\ref{def:tau-j-ind},
define 
\begin{align*}
\absorb'_j &=  \min \{ 
\mbox{$t \in \nonnegints \mid V'_j(t) = V $ or  $V'_j(t) = \emptyset$} \},\\
\absorb' &=   \min \{ t \in \nonnegints \mid 
\mbox{ for some    
    $j\in\tau$, $V'_j(t) = V$} 
    \}, \text{and}\\
	\absorb'_{\geq j} &=   \min \{ t \in \nonnegints\mid
 \mbox{
 $\cup_{i \geq j} V'_i(t) = \emptyset$  or, for some $i \geq j$,  $V'_i(t) = V$ 
 }
 \}.
\end{align*}

Since $\absorb'_{\geq 2} = \absorb'$, 
Lemma~\ref{lem:exp-tau-j-bound} applied to~$M'$ implies \begin{equation}\label{eq:Mprime_absorb}
        \E(\absorb') \leq \sum_{i=2}^k \frac{f_i}{f_i-f_{i-1}} (n+1)n^3 .
\end{equation}

Note that $g(M_0),g(M_1),\ldots$ is a faithful copy of the chain~$M'$.
If $M$ and $M'$ are coupled using this coupling, then at time~$\absorb'$, all vertices
are assigned types with the same fitness in  $M_{\absorb'}$.

Consider any state $S\in \Omega$ in which vertices are assigned $q$ different types, all with the same fitness.
Theorem~11 of the work of D{\'i}az et al.~\cite{diaz-ApproximatingFixationProbabilities-2014} shows that in the 2-type Moran process,
from any state with 2 types, the expected time until there is at most one type is at most $n^6$.
Applying this argument to~$S$ by distinguishing the first type from the $q-1$ others, the expected time until there are at most~$q-1$ types is at most~$n^6$.
Since there are at most~$\ell$ distinct types with any given fitness, 
$\E(\absorb(G,\tau,f,S)) \leq (\ell-1) n^6$.
The upper bound (1) on $\E(\absorb(G,\tau,f,M_0)) $ given in the statement of the theorem follows.

In order to prove (2), note that   $\absorb'_j\leq {\absorb'}_{\geq j}$. 
 Lemma~\ref{lem:exp-tau-j-bound} applied to~$M'$ implies  
 $$
        \E(\absorb'_{\geq j}) \leq \sum_{i=j}^k \frac{f_i}{f_i-f_{i-1}} (n+1)n^3.
$$
The rest of the argument is analogous to the argument for (1).

Finally, for the proof of (3) observe that after time $\absorb'$, either type $\beta$ is extinct, or all
remaining types have fitness $f(\beta)=f_1$. The argument is then the same as the other cases.  This concludes the proof.
\end{proof}

Bounding the absorption time from an initial state chosen from a distribution in $\calD(G,\tau)$ is an immediate consequence. 

\begin{definition}[Absorption time from random initial state] \label{def:abtime_random}
Let $G=(V,E)$ be a connected graph, let $\tau$ be a set of types, let $f:\tau \to \mathbb{Q}_{\geq 1}$ be a fitness function and let $D\in \calD(G,\tau)$.
For the Moran process $M = M(G, \tau, f,D)$ and for any type $j\in \tau$
we define the stopping time 
$$\absorb_j = \absorb_j(G,\tau,f,D) := \min \{ t \in \nonnegints\mid 
\mbox{$V_j(t) = V$ or  $V_j(t) = \emptyset$} \}.$$
We refer to $A_j$ as the absorption time of type~$j$ in~$M$ from~$D$. The total absorption time of $M$ from $D$
is defined as 
$$\absorb = \absorb(G,\tau, f,D) := \min \{ t \in \nonnegints\mid 
\mbox{for some $j\in \tau$, $ V_j(t) = V$ }\}.$$ 
\end{definition}

Since the support of any distribution $D\in \calD(G,\tau)$ 
is contained in $\Omegainit(G,\tau)$, the bounds in Theorem~\ref{thm:absorption_time} also hold if $M_0$ is chosen from $D$. We make this explicit for the most fit type, since this is the result that we will use to establish our FPTRAS.

\begin{corollary}\label{cor:absorption_distribution_max}
Let $G$ be a connected graph with $n$ vertices, let $\tau$ be a set of types with $|\tau|>1$, let 
$f\colon \tau \to \mathbb{Q}_{\geq 1}$
  be a fitness function, and let $D\in \calD(G,\tau)$. Let  $f^\ast =\max\{f(j)\mid j\in\tau\setminus\taumax{f}\}$. The total absorption time of any $\alpha\in\taumax{f}$ is bounded by
 \[\E(\absorb_\alpha(G,\tau,f,D)) \leq (|\taumax{f}|-1)n^6 + \frac{f(\alpha)}{f(\alpha)-f^\ast} (n+1)n^3.\]
\end{corollary}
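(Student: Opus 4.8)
The plan is to invoke part~(2) of Theorem~\ref{thm:absorption_time} and then average over the starting distribution. First I would observe that since $\alpha\in\taumax{f}$, its fitness $f(\alpha)$ equals the largest distinct fitness $f_k=f^+$ in the notation of Theorem~\ref{thm:absorption_time}. Matching the remaining notation, the number $\ell_k=|f^{-1}(f_k)|$ of types sharing this fitness is exactly $|\taumax{f}|$, and the second-largest distinct fitness $f_{k-1}$ is precisely $f^\ast=\max\{f(j)\mid j\in\tau\setminus\taumax{f}\}$ (which is well-defined because the presence of $f^\ast$ in the statement forces $k\geq 2$). With these identifications, part~(2) of Theorem~\ref{thm:absorption_time} with $j=k$ and $\beta=\alpha$ yields, for every fixed initial state $M_0\in\Omegainit(G,\tau)$,
\[\E(\absorb_\alpha(G,\tau,f,M_0))\leq (|\taumax{f}|-1)n^6 + \frac{f(\alpha)}{f(\alpha)-f^\ast}(n+1)n^3,\]
which is exactly the claimed bound, but for a single start state.

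Next I would upgrade this to the distributional setting. By the remark preceding the corollary, every $D\in\calD(G,\tau)$ is supported on $\Omegainit(G,\tau)$, so by the definition of the process started from a distribution (Definition~\ref{def:abtime_random}) the expected absorption time from $D$ is the $D$-weighted average $\E(\absorb_\alpha(G,\tau,f,D))=\sum_{M_0\in\Omegainit(G,\tau)}\Pr_D(M_0)\,\E(\absorb_\alpha(G,\tau,f,M_0))$. Since the per-state bound above does not depend on $M_0$, it passes through this convex combination unchanged, which gives the corollary.

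This argument is essentially bookkeeping: the substantive work was already done in Theorem~\ref{thm:absorption_time}. The only point requiring care is the notational translation between the two statements --- in particular confirming that $\ell_k=|\taumax{f}|$ and $f_{k-1}=f^\ast$, and that the hypothesis $|\tau|>1$ together with the appearance of $f^\ast$ guarantees $k\geq 2$ so that part~(2) is applicable. I do not anticipate any genuine obstacle beyond this matching of parameters.
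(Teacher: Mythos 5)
Your proposal is correct and follows essentially the same route as the paper: apply Theorem~\ref{thm:absorption_time} (part (2) with $j=k$, after identifying $\ell_k=|\taumax{f}|$ and $f_{k-1}=f^\ast$) to each start state in $\Omegainit(G,\tau)$, then average over $D$ using that its support lies in $\Omegainit(G,\tau)$. The paper's proof is just a terser version of this same bookkeeping.
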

\begin{proof}
 Note that
$\E(\absorb_\alpha(G,\tau,f,D)) = \sum_{S\in \Omega} \Pr_D(S) \, E(\absorb_\alpha(G,\tau,f,S))$.
Since the support of~$D$ is contained in $\Omegainit(G,\tau$), the corollary follows from Theorem~\ref{thm:absorption_time}.
\end{proof}

\section{Construction of the FPTRAS} \label{sec:algo}

\begin{reptheorem}{thm:main}
   \stateThmMain
\end{reptheorem}
\begin{proof}
Consider first the following algorithm $\mathbb{A}'$ that  takes as input a connected graph~$G$, 
a set of types~$\tau$ with $|\tau|>1$, 
a fitness function $f\colon \tau \to \mathbb{Q}_{\geq 1}$, 
a type $\alpha\in\taumax{f}$, accuracy parameters $\varepsilon$ and $\delta'$,
and (via oracle access) a distribution   $D\in \calD(G,\tau)$.  
By Corollary~\ref{cor:lower_bound_probability},  $\pi_\alpha(G,\tau,f,D)\geq 1/n$.
    
Set $t=\lceil  3\ln(2/\delta') n/\varepsilon^2 \rceil$. 
The algorithm $\mathbb{A}'$ performs
$t$ independent simulations of $M(G,\tau,f,D)$, 
stopping the simulation if  either 
(i) all vertices have type $\alpha$, or (ii) type $\alpha$ is extinct. 
In each simulation, the initial state is sampled using the oracle for $D$.

Let  $f^\ast =\max\{f(j)\mid j\in\tau\setminus\taumax{f}\}$. 
Corollary~\ref{cor:absorption_distribution_max} 
implies that the expected time of each simulation is at most 
$\mathsf{poly}(n)\cdot|\tau|\cdot f(\alpha)/(f(\alpha)-f^\ast)$.
Let $w$ be the number of simulations in which type $\alpha$ fixates.  The output of algorithm~$\mathbb{A}'$ is~$w/t$.

For $i\in \{1,\dots,t\}$ let $X_i$ be the indicator for the event that $\alpha$ fixates in the $i$'th simulation. Since $\E(X_i) = \pi_\alpha(G,\tau,f,D)\geq 1/n $, it follows that $\E(w/t) = \E(\tfrac{1}{t}  \sum_{i=1}^t X_i)= \pi_\alpha(G,\tau,f,D)$. By a Chernoff bound (see e.g.\ the standard textbook of Mitzenmacher and Upfal~\cite[Theorem 11.1]{MitzenmacherU05}),
\[ \Pr(|w/t - \pi_\alpha(G,\tau,f,D)| < \varepsilon \pi_\alpha(G,\tau,f,D)) \geq 1-\delta'.\]

The expected total running time of $\mathbb{A}'$ is bounded from above by $t\cdot \mathsf{poly}(n)\cdot|\tau|\cdot f(\alpha)/(f(\alpha)-f^\ast)$.  
We next obtain an $\epsilon$-approximation algorithm $\mathbb{A}$ from $\mathbb{A}'$ using the following standard construction.
The algorithm $\mathbb{A}$ takes as input $G$, $\tau$, $f$, $\alpha$, $\varepsilon$, and
(via oracle access) the distribution~$D$.
The algorithm $\mathbb{A}$ simulates   $\mathbb{A}'$ 
with these inputs and with $\delta'=1/8$  for at most $8t\cdot \mathsf{poly}(n)\cdot|\tau|\cdot f(\alpha)/(f(\alpha)-f^\ast)$ steps. If $\mathbb{A}'$ terminates within this number of steps, 
then algorithm~$\mathbb{A}$ returns its output.
Otherwise, algorithm~$\mathbb{A}$ stops the simulation and returns~$0$. 

By Markov's inequality, the probability that $\mathbb{A}$ stops the simulation is at most~$1/8$. Thus, the probability that $\mathbb{A}$ returns an $\varepsilon$-approximation to $\pi_\alpha(G,\tau,f,D)$ is at least $1 - \delta' - 1/8 \geq 3/4$.

The desired FPTRAS takes the median of the outputs after repeating
algorithm $\mathbb{A}$ 
$O(\log(1/\delta))$ times.   
\end{proof}

\begin{reptheorem}{thm:maxfix-mut-has-fptras}
   \stateThmMaxfixMutHasFptras
\end{reptheorem}
\begin{proof}
The proof is the same as the proof of Theorem~\ref{thm:main} except that 
the algorithms can directly sample from~$\dmut(G,\tau,\tauord)$ in polynomial time, without using the oracle. 
\end{proof}

\section{Upper-bounding fixation probabilities in terms of the 2-type Moran process}

Corollary~\ref{cor:tech} gives a lower bound on the fixation probability of a type~$\alpha$ in the multi-type Moran process, based on the fixation probability of~$\alpha$ in a related two-type Moran process.
It is just as easy to get upper bounds with this idea, as the following corollary shows.

\begin{corollary}\label{cor:techrev}
Let $G=(V,E)$ be a connected graph, let $\tau$ be a set of types, and let $f\colon \tau \to \mathbb{Q}_{\geq 1}$ be a fitness function. Let $\alpha$ be a type in~$\tau$.
Let $\beta = \arg\min\{f(j) \mid j\in \tau\setminus\{\alpha\}\}$.
Let $\tau'=\{\alpha,\beta\}$.
Let $\Omega$ be the set of functions from $V$ to $\tau$ and let $\Omega'$ be the set of functions from $V$ to $\tau'$. Let $g\colon \Omega \to \Omega'$ be defined as follows.
For every state $S\in \Omega$, $g(S)$ is the state  
such that $g(S)(v) = \alpha$ if $S(v)=\alpha$ and $g(S)(v)=\beta$, otherwise.
Then, for every 
state $M_0 \in \Omega$, $\pi_\alpha(G,\tau,f,M_0)\leq \pi_\alpha(G,\tau',f,g(M_0))$.
 \end{corollary}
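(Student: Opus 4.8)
The plan is to mirror the proof of Corollary~\ref{cor:tech}, flipping every inequality because here $\beta$ is the \emph{least}-fit type in $\tau\setminus\{\alpha\}$ rather than the most-fit one. First I would introduce the fitness function $f'\colon\tau\to\mathbb{Q}_{\geq 1}$ with $f'(\alpha)=f(\alpha)$ and $f'(\beta')=f(\beta)$ for every $\beta'\in\tau\setminus\{\alpha\}$, and define $\tildeV_\alpha$ and $\tildeV'_\alpha$ exactly as in Corollary~\ref{cor:tech}. Precisely as in that corollary, collapsing all non-$\alpha$ types into a single type of fitness $f(\beta)$ leaves the probability that $\alpha$ takes over unchanged, since every non-$\alpha$ type has identical fitness under $f'$; hence $\pi_\alpha(G,\tau',f,g(M_0))=\pi_\alpha(G,\tau,f',M_0)$, and it suffices to prove $\pi_\alpha(G,\tau,f,M_0)\leq\pi_\alpha(G,\tau,f',M_0)$.

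The key step is an application of Lemma~\ref{lem:coupling}, but with the two fitness functions playing the \emph{opposite} roles to those in Corollary~\ref{cor:tech}. Since $\beta$ minimises $f$ over $\tau\setminus\{\alpha\}$, we have $f'(\beta')=f(\beta)\leq f(\gamma)$ for all $\beta',\gamma\in\tau\setminus\{\alpha\}$, while $f'(\alpha)=f(\alpha)$. I would therefore invoke Lemma~\ref{lem:coupling} with its hypothesis instantiated by taking its ``$f$'' to be my $f'$ and its ``$f'$'' to be my $f$: the required conditions $f'(\alpha)=f(\alpha)$ and $f'(\beta')\leq f(\gamma)$ for non-$\alpha$ types hold exactly. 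The lemma then supplies a coupling of $\tildeM=\tildeM(G,\tau,f,M_0)$ and $\tildeM'=\tildeM'(G,\tau,f',M_0)$ under which $\tildeV_\alpha(t)\subseteq\tildeV'_\alpha(t)$ for every $t\in\nonnegreals$; that is, the process with the smaller non-$\alpha$ fitnesses (here $f'$) carries the larger $\alpha$-region.

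From $\tildeV_\alpha(t)\subseteq\tildeV'_\alpha(t)$ it follows that whenever $\alpha$ fixates in $\tildeM$ it also fixates in $\tildeM'$, so $\tildepi_\alpha(G,\tau,f,M_0)\leq\tildepi_\alpha(G,\tau,f',M_0)$. I would then pass back from the continuous to the discrete process via $\tildepi_\alpha=\pi_\alpha$ for both fitness functions, and combine this with the identity of the first paragraph to get $\pi_\alpha(G,\tau,f,M_0)\leq\pi_\alpha(G,\tau,f',M_0)=\pi_\alpha(G,\tau',f,g(M_0))$, as required.

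Because this is a near-verbatim dualisation of Corollary~\ref{cor:tech}, there is no substantial obstacle beyond bookkeeping. The one point demanding care is matching the monotonicity direction of Lemma~\ref{lem:coupling} correctly: choosing $\beta$ as the minimiser makes $f'$ (not $f$) the function supplying the smaller non-$\alpha$ fitnesses, hence the larger $\alpha$-set, and it is precisely this role-swap that turns the $\geq$ of Corollary~\ref{cor:tech} into the $\leq$ claimed here.
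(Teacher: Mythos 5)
Your proposal is correct and follows essentially the same route as the paper's own proof: define the collapsed fitness function $f'$, reduce to comparing $\pi_\alpha(G,\tau,f,M_0)$ with $\pi_\alpha(G,\tau,f',M_0)$, and apply Lemma~\ref{lem:coupling} with the roles of the two fitness functions reversed to obtain $\tildeV_\alpha(t)\subseteq\tildeV'_\alpha(t)$. Your instantiation of the lemma's hypotheses and the resulting direction of the inclusion are both handled correctly.
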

\begin{proof}
Let  $f'\colon \tau \to \mathbb{Q}_{\geq 1}$ 
be the fitness function with $f'(\alpha)=\alpha$ and, for any $\beta' \in \tau\setminus \{\alpha\}$,
$f'(\beta') = f(\beta)$.
Note that  
$\pi_\alpha(G,\tau',f,g(M_0)) = \pi_\alpha(G,\tau,f',M_0)$, so the goal is to
prove $\pi_\alpha(G,\tau,f,M_0)\leq \pi_\alpha(G,\tau,f',M_0)$.

Let $\tildeM = \tildeM(G,\tau,f,M_0)$ and let $\tildeM' = \tildeM'(G,\tau,f',M_0)$.
For all $j\in \tau$, let $\tildeV_j(t)=\{v \in V \mid \tildeM_t(v) = j\}$ and 
$\tildeV'_j(t)=\{v \in V \mid \tildeM'_t(v) = j\}$.
Applying Lemma~\ref{lem:coupling} (with the roles of $f$ and $f'$ reversed)
we find that 
for any  time $t \in \nonnegreals$, $\tildeV_{\alpha}'(t)\supseteq \tildeV_{\alpha}(t)$.
so 
$\tildepi_\alpha(G,\tau,f,M_0) \leq \tildepi_\alpha(G,\tau,f',M_0)$.
As in the proof of Corollary~\ref{cor:tech} the final step is to observe that
$\tildepi_\alpha(G,\tau,f,M_0) = \pi_\alpha(G,\tau,f,M_0)$
and
$\tildepi_\alpha(G,\tau,f',M_0) = \pi_\alpha(G,\tau,f',M_0)$.
  \end{proof}

We conclude the paper by using Corollaries~\ref{cor:tech} and~\ref{cor:techrev} 
to re-derive the recent bounds of Ferreira and Neves
on fixation probabilities of the 3-type Moran process in the very special case where $G$ is a complete graph. Their main result on the Moran process~\cite[Theorem~4]{ferreira-FixationProbabilitiesMoran-2020} 
 corresponds to  the special case of Lemma~\ref{lem:bounds_Kn} with $\tau=\{A,B,C\}$ and $f(A)>f(B)>f(C)$.

\begin{lemma}\label{lem:bounds_Kn}\label{lem:complete}
Let $n$ be a positive integer and let $G=(V,E)$ be a complete graph with $n$ vertices.
Let $\tau$ be a set of types and let $f\colon \tau \to \mathbb{Q}_{\geq 1}$ be a fitness function.
Let $\alpha$ be a type in~$\tau$.  Let $\beta^+ = \arg \max \{f(\beta) \mid \beta\in \tau \setminus\{\alpha\}\}$ and
$\beta^-=\arg \min \{f(\beta) \mid \beta\in \tau \setminus\{\alpha\}\}$. 
Suppose that $f(\alpha) \neq f(\beta^+)$ and $f(\alpha) \neq f(\beta^-)$.
Let $M_0$ be a state in the set $\Omega$ of functions from~$V$ to~$\tau$.
Let $i$ be the number of vertices in~$V$
which are mapped to type~$\alpha$ by~$M_0$. Then 
\[ \frac{1-\left(\frac{f(\beta^+) }{ f(\alpha)}\right)^{i}\strut}{\strut1-\left(\frac{f(\beta^+)}{ f(\alpha)}\right)^n} \leq \pi_\alpha(G,\tau,f,M_0) \leq \frac{1-\left(\frac{f(\beta^-)}{ f(\alpha)}\right)^{i}}{\strut 1-\left(\frac{f(\beta^-)}{ f(\alpha)}\right)^n} .\]
\end{lemma}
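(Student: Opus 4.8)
The plan is to derive both bounds from the two reduction corollaries already proved, combined with the classical closed form for the fixation probability of the two-type Moran process on the complete graph. Each corollary sandwiches the multi-type quantity $\pi_\alpha(G,\tau,f,M_0)$ between itself and a two-type fixation probability on $K_n$; the only genuinely new ingredient is evaluating that two-type probability explicitly.

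First I would apply Corollary~\ref{cor:tech}. In the notation there the distinguished competitor $\beta$ is exactly $\beta^+$, so with $\tau'=\{\alpha,\beta^+\}$ and $g$ the map collapsing every non-$\alpha$ type to $\beta^+$, the corollary gives
\[\pi_\alpha(G,\tau,f,M_0)\ \geq\ \pi_\alpha(G,\{\alpha,\beta^+\},f,g(M_0)).\]
Symmetrically, Corollary~\ref{cor:techrev} (whose competitor is the \emph{least} fit non-$\alpha$ type, i.e.\ $\beta^-$), with $\tau'=\{\alpha,\beta^-\}$, yields
\[\pi_\alpha(G,\tau,f,M_0)\ \leq\ \pi_\alpha(G,\{\alpha,\beta^-\},f,g(M_0)).\]
In both cases $g$ leaves the $\alpha$-vertices untouched, so the collapsed initial state has exactly $i$ vertices of type $\alpha$. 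Note that no monotonicity reasoning is needed and it is irrelevant whether $\alpha$ is the fittest type: the corollaries already supply the inequalities in the correct directions. The hypotheses $f(\alpha)\neq f(\beta^+)$ and $f(\alpha)\neq f(\beta^-)$ are precisely what make the resulting two-type processes non-degenerate.

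Next I would compute the two-type fixation probability on the complete graph. Fix a competitor $\gamma\in\{\beta^+,\beta^-\}$ and write $a=f(\alpha)$, $b=f(\gamma)$. Because $K_n$ is vertex-transitive, the two-type process depends on the current state only through the number $X_t$ of $\alpha$-vertices, so $X_t$ is a birth--death chain on $\{0,1,\dots,n\}$ that is absorbing at $0$ and $n$. From the transition rule (a reproducing vertex is chosen proportional to fitness, then a uniformly random neighbour among the other $n-1$ vertices), the probability of increasing and of decreasing the count from a state with $i$ mutants are $p_i=\tfrac{ia}{F}\tfrac{n-i}{n-1}$ and $q_i=\tfrac{(n-i)b}{F}\tfrac{i}{n-1}$ respectively, where $F=ia+(n-i)b$; transitions in which a vertex reproduces onto a vertex of its own type leave $X_t$ unchanged and do not affect absorption probabilities. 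The ratio is the constant $q_i/p_i=b/a=f(\gamma)/f(\alpha)$, so writing $\rho:=f(\gamma)/f(\alpha)\neq 1$ the standard gambler's-ruin formula gives
\[\pi_\alpha(G,\{\alpha,\gamma\},f,g(M_0))\ =\ \frac{\sum_{j=0}^{i-1}\rho^{\,j}}{\sum_{j=0}^{n-1}\rho^{\,j}}\ =\ \frac{1-\rho^{\,i}}{1-\rho^{\,n}}.\]
Substituting $\gamma=\beta^+$ into the lower bound and $\gamma=\beta^-$ into the upper bound reproduces exactly the two fractions in the statement, completing the argument.

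I expect the only real obstacle to be this last step: one must justify carefully that on $K_n$ the process collapses to a one-dimensional birth--death chain (via vertex-transitivity) and that the per-step ratio $q_i/p_i$ is genuinely independent of $i$. Once the constant ratio $b/a$ is in hand, the closed form is the routine gambler's-ruin computation, the geometric sums collapse, and the hypothesis $f(\alpha)\neq f(\gamma)$ is exactly what guarantees the denominator $1-\rho^{\,n}$ is nonzero. The boundary cases $i=0$ and $i=n$ are handled automatically, since the formula then evaluates to $0$ and $1$ respectively.
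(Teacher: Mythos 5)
Your proposal is correct and follows the paper's proof essentially verbatim: both bounds come from Corollary~\ref{cor:tech} (with competitor $\beta^+$) and Corollary~\ref{cor:techrev} (with competitor $\beta^-$), followed by the classical biased gambler's-ruin formula for the two-type process on $K_n$. The only difference is that you write out the birth--death computation explicitly where the paper cites Lieberman et al., and your computation (constant ratio $q_i/p_i=f(\gamma)/f(\alpha)$) is correct.
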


\begin{proof}
We first give the lower bound.
Let $\beta = \beta^+$.
Let $\tau'=\{\alpha,\beta\}$.
Let   $\Omega'$ be the set of functions from $V$ to $\tau'$. Let $g\colon \Omega \to \Omega'$ be defined as follows.
For every state $S\in \Omega$, $g(S)$ is the state  
such that $g(S)(v) = \alpha$ if $S(v)=\alpha$ and $g(S)(v)=\beta$, otherwise.
Corollary~\ref{cor:tech} shows that  $\pi_\alpha(G,\tau,f,M_0)\geq \pi_\alpha(G,\tau',f,g(M_0))$.
To conclude the proof, we note that $\pi_\alpha(G,\tau',f,g(M_0))$ is 
equal to the lower bound given in the lemma statement. This well-known fact follows because $\pi_\alpha(G,\tau',f,g(M_0))$  is
the probability of
hitting the lower boundary in a biased one-dimensional random walk. 
The walk is biased since $f(\alpha) \neq f(\beta)$. See Lieberman et al.~\cite{lieberman-EvolutionaryDynamicsGraphs-2005} for details. 
The upper bound is the same, except $\beta = \beta^{-}$ and Corollary~\ref{cor:techrev} is
used instead of Corollary~\ref{cor:tech}.  \end{proof}

\addcontentsline{toc}{section}{References}
\bibliographystyle{plainurl}
\bibliography{references}

\begin{thebibliography}{10}

\bibitem{ArvindR02}
Vikraman Arvind and Venkatesh Raman.
\newblock Approximation algorithms for some parameterized counting problems.
\newblock In Prosenjit Bose and Pat Morin, editors, {\em Algorithms and
  Computation, 13th International Symposium, {ISAAC} 2002 Vancouver, BC,
  Canada, November 21-23, 2002, Proceedings}, volume 2518 of {\em Lecture Notes
  in Computer Science}, pages 453--464. Springer, 2002.
\newblock \href {https://doi.org/10.1007/3-540-36136-7\_40}
  {\path{doi:10.1007/3-540-36136-7\_40}}.

\bibitem{beerenwinkel-GeneticProgressionWaiting-2007}
Niko Beerenwinkel, Tibor Antal, David Dingli, Arne Traulsen, Kenneth~W.
  Kinzler, Victor~E. Velculescu, Bert Vogelstein, and Martin~A. Nowak.
\newblock Genetic {{Progression}} and the {{Waiting Time}} to {{Cancer}}.
\newblock {\em PLOS Computational Biology}, 3(11):e225, November 2007.
\newblock \href {https://doi.org/10.1371/journal.pcbi.0030225}
  {\path{doi:10.1371/journal.pcbi.0030225}}.

\bibitem{chao-SpatialMutationModel-2021}
Brian Chao and Jason Schweinsberg.
\newblock A {{Spatial Mutation Model}} with {{Increasing Mutation Rates}}.
\newblock {\em CoRR}, abs/2108.09590, 2021.
\newblock \href {http://arxiv.org/abs/2108.09590} {\path{arXiv:2108.09590}},
  \href {https://doi.org/10.48550/arXiv.2108.09590}
  {\path{doi:10.48550/arXiv.2108.09590}}.

\bibitem{chatterjee-FasterMontecarloAlgorithms-2017}
Krishnendu Chatterjee, Rasmus {Ibsen-Jensen}, and Martin~A. Nowak.
\newblock Faster monte-carlo algorithms for fixation probability of the moran
  process on undirected graphs.
\newblock In {\em 42nd International Symposium on Mathematical Foundations of
  Computer Science}, volume~83 of {\em {{LIPIcs}}}, pages 61:1--61:13. {Schloss
  Dagstuhl - Leibniz-Zentrum f\"ur Informatik}, 2017.
\newblock \href {https://doi.org/10.4230/LIPIcs.MFCS.2017.61}
  {\path{doi:10.4230/LIPIcs.MFCS.2017.61}}.

\bibitem{diaz-ApproximatingFixationProbabilities-2014}
Josep D{\'i}az, Leslie~Ann Goldberg, George~B. Mertzios, David Richerby, Maria
  Serna, and Paul~G. Spirakis.
\newblock Approximating {{Fixation Probabilities}} in the {{Generalized Moran
  Process}}.
\newblock {\em Algorithmica}, 69(1):78--91, May 2014.
\newblock \href {https://doi.org/10.1007/s00453-012-9722-7}
  {\path{doi:10.1007/s00453-012-9722-7}}.

\bibitem{diaz-AbsorptionTimeMoran-2016}
Josep D{\'i}az, Leslie~Ann Goldberg, David Richerby, and Maria Serna.
\newblock Absorption time of the {{Moran}} process.
\newblock {\em Random Structures \& Algorithms}, 49(1):137--159, 2016.
\newblock \href {https://doi.org/10.1002/rsa.20617}
  {\path{doi:10.1002/rsa.20617}}.

\bibitem{etheridge-CoalescentDualProcess-2009}
A.~M. Etheridge and R.~C. Griffiths.
\newblock A coalescent dual process in a {{Moran}} model with genic selection.
\newblock {\em Theoretical Population Biology}, 75(4):320--330, 2009.
\newblock \href {https://doi.org/10.1016/j.tpb.2009.03.004}
  {\path{doi:10.1016/j.tpb.2009.03.004}}.

\bibitem{ferreira-FixationProbabilitiesMoran-2020}
Eliza~M. Ferreira and Armando G.~M. Neves.
\newblock Fixation probabilities for the {{Moran}} process with three or more
  strategies: General and coupling results.
\newblock {\em Journal of Mathematical Biology}, 81(1):277--314, July 2020.
\newblock \href {https://doi.org/10.1007/s00285-020-01510-0}
  {\path{doi:10.1007/s00285-020-01510-0}}.

\bibitem{fisher-DominanceRatio-1923}
R.~A. Fisher.
\newblock On the {{Dominance Ratio}}.
\newblock {\em Proceedings of the Royal Society of Edinburgh}, 42:321--341,
  1923.
\newblock \href {https://doi.org/10.1017/S0370164600023993}
  {\path{doi:10.1017/S0370164600023993}}.

\bibitem{anngoldberg-PhaseTransitionsMoran-2020}
Leslie~Ann Goldberg, John Lapinskas, and David Richerby.
\newblock Phase transitions of the {{Moran}} process and algorithmic
  consequences.
\newblock {\em Random Structures \& Algorithms}, 56(3):597--647, 2020.
\newblock \href {https://doi.org/10.1002/rsa.20890}
  {\path{doi:10.1002/rsa.20890}}.

\bibitem{hajek-HittingTimeOccupationTimeBounds-1982}
Bruce Hajek.
\newblock Hitting-{{Time}} and {{Occupation-Time Bounds Implied}} by {{Drift
  Analysis}} with {{Applications}}.
\newblock {\em Advances in Applied Probability}, 14(3):502--525, 1982.
\newblock \href {https://doi.org/10.2307/1426671} {\path{doi:10.2307/1426671}}.

\bibitem{he-DriftAnalysisAverage-2001}
Jun He and Xin Yao.
\newblock Drift analysis and average time complexity of evolutionary
  algorithms.
\newblock {\em Artificial Intelligence}, 127(1):57--85, 2001.
\newblock \href {https://doi.org/10.1016/S0004-3702(01)00058-3}
  {\path{doi:10.1016/S0004-3702(01)00058-3}}.

\bibitem{lanchier-WrightFisherMoran-2017}
Nicolas Lanchier.
\newblock Wright–{{Fisher}} and {{Moran}} models.
\newblock In Nicolas Lanchier, editor, {\em Stochastic {{Modeling}}},
  Universitext, pages 203--218. {Springer International Publishing}, 2017.
\newblock \href {https://doi.org/10.1007/978-3-319-50038-6_12}
  {\path{doi:10.1007/978-3-319-50038-6_12}}.

\bibitem{lieberman-EvolutionaryDynamicsGraphs-2005}
Erez Lieberman, Christoph Hauert, and Martin~A. Nowak.
\newblock Evolutionary dynamics on graphs.
\newblock {\em Nature}, 433(7023):312--316, January 2005.
\newblock \href {https://doi.org/10.1038/nature03204}
  {\path{doi:10.1038/nature03204}}.

\bibitem{MitzenmacherU05}
Michael Mitzenmacher and Eli Upfal.
\newblock {\em Probability and Computing: Randomized Algorithms and
  Probabilistic Analysis}.
\newblock Cambridge University Press, 2005.
\newblock \href {https://doi.org/10.1017/CBO9780511813603}
  {\path{doi:10.1017/CBO9780511813603}}.

\bibitem{Moran58}
P.~A.~P. Moran.
\newblock Random processes in genetics.
\newblock {\em Mathematical Proceedings of the Cambridge Philosophical
  Society}, 54(1):60–71, 1958.
\newblock \href {https://doi.org/10.1017/S0305004100033193}
  {\path{doi:10.1017/S0305004100033193}}.

\bibitem{Nowak06}
Martin~A. Nowak.
\newblock {\em Evolutionary Dynamics: Exploring the Equations of Life}.
\newblock Harvard University Press, 2006.
\newblock \href {https://doi.org/10.2307/j.ctvjghw98}
  {\path{doi:10.2307/j.ctvjghw98}}.

\bibitem{vogelstein-CancerGenomeLandscapes-2013}
Bert Vogelstein, Nickolas Papadopoulos, Victor~E. Velculescu, Shibin Zhou,
  Luis~A. Diaz, and Kenneth~W. Kinzler.
\newblock Cancer {{Genome Landscapes}}.
\newblock {\em Science (New York, N.Y.)}, 339(6127):1546--1558, March 2013.
\newblock \href {https://doi.org/10.1126/science.1235122}
  {\path{doi:10.1126/science.1235122}}.

\bibitem{wright-EvolutionMendelianPopulations-1931}
Sewall Wright.
\newblock Evolution in {{Mendelian Populations}}.
\newblock {\em Genetics}, 16(2):97--159, 1931.
\newblock \href {https://doi.org/10.1093/genetics/16.2.97}
  {\path{doi:10.1093/genetics/16.2.97}}.

\end{thebibliography}

\end{document}